\tikzstyle{every path}=[thick]
\newcommand{\NULL}{\textsc{null}}
\long\def\Omit#1{}
\def\diss{{\delta}}
\newtheorem{theorem}{Theorem}
\newtheorem{lemma}[theorem]{Lemma}
\newtheorem{observation}[theorem]{Observation}
\newtheorem{corollary}[theorem]{Corollary}
\newtheorem{definition}[theorem]{Definition}
\newtheorem{example}{Example}
\newtheorem{claim}{Claim}
\newcounter{claimCount}
\theoremstyle{remark}
\newtheorem*{ccproof}{Proof}
\newenvironment{cproof}[1][\proofname]{%
  \begin{ccproof}%
}{%
\phantom{}~\hfill~$\triangle$
  \end{ccproof}%
}
\newcolumntype{\expand}{}
\long\@namedef{NC@rewrite@\string\expand}{\expandafter\NC@find}
	\def\problem@arg{#1}%
	\def\problem@framed{framed}%
	\def\problem@lined{lined}%
	\def\problem@doublelined{doublelined}%
	\def\problem@hline{}%
	\def\problem@hline{\hline\hline}%
	\def\problem@hline{\hline}%
	\def\problem@tablelayout{|>{\bfseries}lX|c}%
	\def\problem@title{\multicolumn{2}{|l|}{%
			\raisebox{-\fboxsep}{\textsc{\large #2}}%
	}}%
	\def\problem@tablelayout{>{\bfseries}lXc}%
	\def\problem@title{\multicolumn{2}{l}{%
			\raisebox{-\fboxsep}{\textsc{\large #2}}%
	}}%
\colorlet{darkgreen}{green!60!black}
\def\real{\hbox{\rm\vrule\kern-1pt R}}
\def\nat{\hbox{\rm\vrule\kern-1pt N}}
\newcommand{\PP}{\textsf{P}}
\newcommand{\NP}{\textsf{NP}}
\newcommand{\MINSUM}{\textsc{CG Min-Sum Dissatisfaction}}
\newcommand{\pred}{\textnormal{pred}}
\newcommand{\succc}{\textnormal{succ}}
\newcommand{\probdef}[6][Question]{
\hbox{\vbox{
  \medskip
  \ifthenelse{\equal{#5}{}}{}{\label{#6}}
  \noindent\ifthenelse{\equal{#4}{}}{}{\textsc{#4}\ifthenelse{\equal{#5}{}}{}{ \sloppy\mbox{(\textsc{#5})}}}
  \par
  \smallskip
  \noindent\begin{tabularx}{\textwidth}{@{}l X}
    \textbf{Input:}		& {#2}\\
    \textbf{#1:}		& {#3}
  \end{tabularx}
  \medskip

}}
}
\title{Allocation of Indivisible Items with a Common Preference Graph: Minimizing Total Dissatisfaction}
\author[1,2]{Nina Chiarelli}
\author[3]{Cl\'ement Dallard}
\author[4]{Andreas Darmann}
\author[4]{Stefan Lendl}
\author[1,2]{Martin Milani\v c}
\author[1]{Peter Mur\v si\v c}
\author[4]{Ulrich Pferschy}
\affil[1]{FAMNIT, University of Primorska, Glagolja\v ska 8, 6000 Koper, Slovenia}
\affil[2]{IAM, University of Primorska, Muzejski trg 2, 6000 Koper, Slovenia}
\affil[3]{Department of Informatics, University of Fribourg, Boulevard de P\'erolles 90, 1700 Fribourg, Switzerland}
\affil[4]{Department of Operations and Information Systems, University of Graz, Universit\"atsstra\ss{}e, 8010 Graz, Austria}
\date{}
\begin{document}
\maketitle

\begin{abstract}
Allocating indivisible items among a set of agents is a frequently studied discrete optimization problem.
In the setting considered in this work, the agents' preferences over the items are assumed to be identical. We consider a very recent measure for the overall quality of an allocation which does not rely on numerical valuations of the items.
Instead, it captures the agents' opinion by a directed acyclic preference graph with vertices representing items.
An arc $(a,b)$ in such a graph means that the agents prefer item $a$ over item $b$.
For a given allocation of items the dissatisfaction of an agent is defined as the number of items which the agent does not receive and for which no more preferred item is given to the agent.
Our goal is to find an efficient allocation of the items to the agents such that the total dissatisfaction over all agents is minimized.

We explore the dichotomy between \NP-hard and polynomially solvable instances, depending on properties of the underlying preference graph.
While the problem is \NP-hard for three agents even on very restricted graph classes, it is polynomially solvable for two agents on general preference graphs.
For an arbitrary number of agents, we derive polynomial-time algorithms for relevant restrictions of the underlying undirected graph.
These are trees and, among the graphs of treewidth two, series-parallel graphs and cactus graphs.

\bigskip
\noindent{\bf Keywords:} efficient allocation, partial order, preference graph, dissatisfaction, polynomial-time algorithm, computational complexity

\bigskip
\noindent{\bf MSC (2020):}
91B32, % Resource and cost allocation (including fair division, apportionment, etc.)
90C27, % Combinatorial optimization
68Q25, % Analysis of algorithms and problem complexity
05C85, % Graph algorithms
05C20, % Directed graphs (digraphs), tournaments
05C05, % Trees
90B10, % Deterministic network models in operations research
06A06 % Partial orders, general
\end{abstract}

\section{Introduction}

The effective allocation of tasks, goods or general objects (from now on called \emph{items}) to the members of a group (called \emph{agents}) is a classical topic in Applied Mathematics.
Often these items represent indivisible elements that can only be assigned as a whole, and each item can be assigned to at most one agent.
This property puts the problem into the area of Discrete Optimization where it gave rise to a large body of literature (e.g.,  \cite{bouveret-survey, darmann2015, lang2016, vetschera2010}).

A wide range of models have been put forward to allocate items to agents on the basis of their preferences over the items, capturing agents' preferences in different ways. These include, e.g., ranking all subsets of items (see~\cite{puppe}), numerical evaluations of the items (see, e.g., \cite{amanatidis2023,bouveret-survey,santaclaus,ourarx}), and ordinal preferences over the items (see, e.g., \cite{aziz,baumeister2017,brams-two-envyfree}).
From a practical point of view,  however, ranking all subsets of items already becomes very tedious for a small number of items; also, it can be quite demanding to come up with a numerical value (representing utility or profit) for each item, if an acceptable monetary valuation is not obvious.
Indeed, for many people it can be very challenging to find meaningful numerical values to express their preferences on a larger set of items.
On the other hand, it seems much easier to compare two items and say which one is liked better.
Also, from a theoretical point of view ordinal valuations have certain advantages over cardinal valuations (see, e.g.,~\cite{aziz}).
However, one may find many pairs of items merely incomparable in several scenarios: the items may  include very different categories of objects; items may consist of several attributes, which do not outweigh each other in an obvious way (compare~\cite{annonini2009});
in presence of choice overload (see, e.g., \cite{buturak, scheibehenne}), one may simply not be able (or it is too challenging) to compare all pairs of alternatives.

Note that incomparability between two items does not imply that these items induce equal value for an agent. This becomes particularly visible when considering items of different types, e.g., bicycles and laptops: one might not be able to express a preference between a certain bicycle and a dedicated laptop, but that does not imply that these are of equal value for the agent --- the items are just too different to be compared. Actually, comparing  items of different types bears the risk of reducing their value to a single metric, such as monetary value, which  may fail to capture the broader significance and qualitatively different values of the items (such as, e.g.,  mobility, exercise and sustainability for bicycles, and access to knowledge and communication for laptops; compare, for instance,~\cite{Sen2004, Walasek2024}).  Therefore, a \textit{partial order} can be seen as a very natural way for agents to express their preferences over a set of heterogenous items.
In this paper we consider the case of identical preferences expressed by a common partial order that is represented by a directed acyclic graph.
Such a group consensus might be reached by agents exchanging their opinions, in particular for sets of less well-known items, or if the agents form a homogenous group with very similar views. In addition, identical preferences may also result from some underlying objective quality of the items that all agents recognize.
Allocations of items to agents with identical preferences over items were treated in the literature, e.g., in \cite{bouveretlang2008, brams2000,freeman}.

For preferences expressed by a partial order the key question remains: How do we evaluate the outcome of an allocation? What is a suitable measure for the utility obtained by an agent from the subset of allocated items?
Recently, in~\cite{general} we introduced a new measure for expressing \emph{satisfaction} or \emph{dissatisfaction} over partial preferences of an agent as follows.
If a certain item is assigned to the agent, then all more preferred items cause dissatisfaction to the agent, unless the agent receives an even more preferred item.
On the other hand, less preferred items contribute to the satisfaction of the agent.

\begin{example}\label{ex:illustrative}
As an illustrative real-world example consider the allocation of toys in a kindergarten.
At the start of the play time each kid is assigned one toy in order to learn how to focus on a single object.
Clearly, some of the toys are more attractive than others, but for many pairs of toys the kids find it hard to choose the preferred toy.
Considering group pressure and adaptation to opinion leaders, it is quite natural that the preferences of all kids tend towards a common consensus.

All kids agree that playing with the tablet is the most coveted (but rarely granted) game.
It is followed by three incomparable, but highly attractive toys:
a remote-controlled 4WD toy car with siren (toy (a)),
a plastic doll with real hair and AI-controlled dialog ability (toy (b)),
and a large brick set of a city scenario (toy (c)).
Each of these three highly ranked toys represents the top choice of a whole category of toys.
Slightly less attractive than (a) is a silent remote-controlled 4WD toy car (d), followed by a remote-controlled sedan (e) and two incomparable choices, namely a hand-operated tow car with a towing crank (f), and a hand-operated fire-engine with a ladder (g).
In the doll box there are two more, but less attractive dolls, incomparable to each other, one which can speak three built-in sentences (h), the other who can water a diaper (i).
If the large city brick set is not available, there is still a choice between a brick castle (j), which is more interesting than a set of knight figures (k), or a unicorn (l) followed by a princess (m).

The preferences over these toys form an out-tree as depicted in Figure~\ref{fig:toy} (formal definitions of graph classes will follow in Section~\ref{sec:prelim}).
If a kid is assigned a certain toy, it will not care too much about the less preferred toys.
However, most kids strive to be the rulers of the play group and would eagerly grab all the more preferred toys as well as all incomparable toys.
Thus, they will voice a dissatisfaction measured by the number of all these toys that they did not receive, and over which they do not prefer the toy they did receive.

\begin{figure}
    \centering
    \begin{tikzpicture}[every node/.style={draw=none,circle,inner sep=0.8pt,minimum size=2mm}]
    %\node[draw=none] at (5.8,1){$X$};
    %\node[draw=none] at (5.8,0){$Y$};
%\node (A1) at (1,1) {$1$};
\node[draw=none] (A1) at (3,6) {tablet};
%\node (A3) at (3,1) {$3$};
%\node (A4) at (4,1) {$4$};
\node (B11) at (0,4.7) {(a)};
%\node (B13) at (1,0) {$4$};
\node (B12) at (3,4.7) {(b)};
%\node (B23) at (3,0) {$1$};
\node (B13) at (6,4.7) {(c)};
\node (C1) at (0,3.4) {(d)};
%\node (B13) at (1,0) {$4$};
\node (C2) at (2,3.4) {(h)};
%\node (B23) at (3,0) {$1$};
\node (C3) at (4,3.4) {(i)};

\node (C4) at (5,3.4) {(j)};
\node (C5) at (7,3.4) {(l)};
\node (D1) at (0,2.1) {(e)};
%\node (B13) at (1,0) {$4$};
\node (D2) at (5,2.1) {(k)};
%\node (B23) at (3,0) {$1$};
\node (D3) at (7,2.1) {(m)};

\node (E1) at (-1,0.8) {(f)};
%\node (B13) at (1,0) {$4$};
\node (E2) at (1,0.8) {(g)};
%\node (B34) at (5,0) {$2$};
\draw[->] (A1) -- (B12);
\draw[->] (A1) -- (B13);
\draw[->] (A1) -- (B11);
\draw[->] (B11) -- (C1);
\draw[->] (B12) -- (C2);
\draw[->] (B12) -- (C3);
\draw[->] (B13) -- (C4);
\draw[->] (B13) -- (C5);
\draw[->] (C1) -- (D1);
\draw[->] (C4) -- (D2);
\draw[->] (C5) -- (D3);
\draw[->] (D1) -- (E1);
\draw[->] (D1) -- (E2);
\end{tikzpicture}
    \caption{An out-tree representing the preference graph over toys as described in Example~\ref{ex:illustrative}.}
    \label{fig:toy}
\end{figure}
\end{example}

More formally, the preferences of the agents can be represented by a common acyclic directed \emph{preference graph} $G=(V,A)$, where $V$ is the set of items and an arc $(a,b) \in A$ means that all agents prefer item $a$ over item $b$.
Since preferences should be transitive, two arcs $(a,b)$ and $(b,c)$ imply that $a$ is also preferred over $c$, even if the arc $(a,c)$ is not contained in $A$.
An item $a$ \emph{dominates} item $b$ if there is a directed path (possibly of length $0$) from $a$ to $b$ in $G$; intuitively, an item $a$ dominating an item $b$ expresses that  $a$ is considered ``at least as good as'' $b$, i.e., that all agents prefer $a$ over $b$ or  $a=b$.
Allocating a subset of items $S\subseteq V$ to an agent incurs a satisfaction value equal to the number of items dominated by any item in $S$.
On the other hand the resulting dissatisfaction value is given by the number of items neither allocated to that agent nor dominated by any item in $S$.
Clearly, for any allocation, satisfaction and dissatisfaction values add up to $|V|$ for every agent, since every item contributes to exactly one of the two values in any allocation.

\begin{example}\label{ex:diss}
As an example highlighting the concept of dissatisfaction consider three agents $a_1$, $a_2$, $a_3$ and the preference graph with $8$ items $1,2,\ldots,8$ depicted in Figure~\ref{fig:polytree}.
The graph is a polytree, i.e., ignoring the arc directions translates the graph into an undirected tree. Consider the allocation given by assigning the items $2$, $4$, and $5$ to agent $a_1$ (rectangular vertices), the items $6$ and $7$ to agent $a_2$ (circles), and the items $1$, $3$, and $8$ to agent $a_3$ (diamonds).
For agent $a_1$, this yields a dissatisfaction of $2$, because $a_1$ does not receive items $1$ and $3$.
Note that, e.g., item $8$ does not contribute to the dissatisfaction of the agent, because item $2$ (which the agent receives) dominates item $8$.
Analogously, with the considered allocation the dissatisfaction of agent $a_2$ is $5$ (induced by the items $1$, $2$, $3$, $4$, $5$) and the dissatisfaction of $a_3$ is $4$ (induced by the items $2$, $4$, $5$, $7$).
Observe that $a_3$ would have the same dissatisfaction when receiving only the single item $1$, because items $3$ and $8$ are already dominated by item $1$.
\end{example}

\begin{figure}
    \centering
    \begin{tikzpicture}[every node/.style={draw,circle,inner sep=0.8pt,minimum size=2mm}]
    %\node[draw=none] at (5.8,1){$X$};
    %\node[draw=none] at (5.8,0){$Y$};
%\node (A1) at (1,1) {$1$};
\node (A1)[rectangle,inner sep=2pt] at (3,6) {$4$};
%\node (A3) at (3,1) {$3$};
%\node (A4) at (4,1) {$4$};
\node (B11) at (1.5,4.7) {$6$};
%\node (B13) at (1,0) {$4$};
\node (B12) at (4.5,4.7) {$7$};
%\node (B23) at (3,0) {$1$};
%\node (B13) at (6,4.7) {(c)};
\node (C1)[rectangle, inner sep=2pt] at (0,6) {$2$};
%\node (B13) at (1,0) {$4$};
\node (C2)[diamond] at (1.5,6) {$3$};
%\node (B23) at (3,0) {$1$};
\node (C3)[diamond] at (1.5,3.4) {$8$};

\node (C4)[rectangle, inner sep=2pt] at (4.5,6) {$5$};
%\node (C5) at (7,3.4) {(l)};
%\node (D1) at (0,2.1) {(e)};
%\node (B13) at (1,0) {$4$};
\node[diamond] (D1) at (1.5,7.3) {$1$};
%\node (B23) at (3,0) {$1$};
%\node (D2) at (3,2.1) {$8$};

%\node (E1) at (1.5,0.8) {$8$};
%\node (B13) at (1,0) {$4$};
%\node (E2) at (1,0.8) {(g)};
%\node (B34) at (5,0) {$2$};
\draw[->] (A1) -- (B12);
%\draw[->] (A1) -- (B13);
\draw[->] (A1) -- (B11);
\draw[->] (C1) -- (B11);
\draw[->] (C2) -- (B11);
\draw[->] (B11) -- (C3);
%\draw[->] (B13) -- (C4);
\draw[->] (C4) -- (B12);
\draw[->] (D1) -- (C2);
%\draw[->] (C4) -- (D2);
%\draw[->] (C5) -- (D3);
%\draw[->] (D1) -- (E1);
%\draw[->] (D1) -- (E2);
\end{tikzpicture}
    \caption{A  preference graph (polytree) for an instance with 8 items and three agents $a_1$, $a_2$, $a_3$.
    Items $2$, $4$, and $5$  (rectangular vertices) are assigned to $a_1$ (resulting in a dissatisfaction of $2$), items $6$ and $7$ (circles) are assigned to $a_2$ (for a dissatisfaction of $5$), and items $1$, $3$, and $8$ (diamonds) are assigned to $a_3$ (for a dissatisfaction of $4$).}
    \label{fig:polytree}
\end{figure}

Observe that dissatisfaction is related to the concept of \emph{envy-freeness}, a standard concept in fair division (see, e.g.,~\cite{foley, brams-two-envyfree, bouveret-survey}).
Envy refers to situations in which an agent is willing to exchange its whole bundle of items with the items allocated to another agent, as the agent would prefer the latter set of items to its original allocation.
While there are various options for the definition of such a preference relation for sets of totally ordered items, see~\cite{Barbera2004}, the case of partial orders enforces a quite restrictive definition:
A set of items $S_1$ is preferred over $S_2$ if for each item $s\in S_2$ there exists an item $s' \in S_1$ which is preferred over $s$ according to the partial order.
Clearly, for this definition a set $S_1$ preferred over $S_2$ will also give a lower dissatisfaction value.
However, the notion of envy implied by this preference relation is overly restrictive and can be relaxed by the additive dissatisfaction concept applied in this work.
An agent may envy another agent for some of the items they did not receive and for which they did not receive a dominating item (in Example~\ref{ex:diss}, e.g., agent $a_2$ envies (i) agent $a_3$ for the items $1$ and $3$, and (ii) agent $a_1$ for the items $2$, $4$ and $5$).
This is reflected by the dissatisfaction value, even if the agent could not compare its whole set of items with the allocation of the other agent and thus would be seen as envy-free.
Therefore, the dissatisfaction value allows a more nuanced representation of comparative unhappiness than the strict definition of envy.

\medskip
In this paper we want to find an allocation with maximal total satisfaction over all agents (i.e., the sum of each agent's satisfaction), which is equivalent to minimizing total dissatisfaction. In terms of social welfare induced by an allocation, maximizing total satisfaction reflects the idea of maximizing \textit{utilitarian social welfare}, hence aiming at an efficient allocation. We point out that in our companion paper~\cite{minmax}, the goal is to maximize \textit{egalitarian social welfare}, where the aim is to make the least satisfied agent as happy as possible (see also Section~\ref{sec:contr}).
Our research agenda focuses on the characterization of the boundary between \NP-hard and polynomially solvable cases of the problem depending on structural properties of the given preference graph.

The computational complexity involved in finding allocations of indivisible items that aim at maximizing social welfare has been studied for different notions of social welfare and different types of preference representation (see, e.g., \cite{bouveret-survey, Brandt}).
In this respect, besides utilitarian social welfare and egalitarian social welfare, also Nash social welfare has been in the focus of research (see, e.g.,~\cite{procaccia-Nash}). However, in contrast to our approach, typically the social welfare of an allocation is based on  individual utilities or scores from voting theory.  For the former, computational complexity results have been presented, for instance, for the representation of utilities in bundle form, $k$-additive form, or straight-line programs. For utilities expressed in
the bundle form, $\sf NP$-hardness results have been provided  for
utilitarian social welfare \cite{chevalyrekadditive}, egalitarian social welfare \cite{roosarothe},
and Nash product social welfare \cite{roosarothe, ulle}. In addition, already for the $1$-additive form (i.e., additive utilities), maximizing
egalitarian social welfare and Nash social welfare was shown to be \NP-hard in general \cite{Lipton,roosarothe}, and  maximizing
egalitarian social welfare is \NP-hard even for different scores from voting theory such  as the well-known Borda score~\cite{Baumeister2013}. Observe that in contrast to these works, in our paper the agents' preferences are expressed by a partial order over the items and a novel way of measuring the agents' satisfaction is considered.

\subsection{Contribution of the paper}\label{sec:contr}

At first we give a simple linear-time algorithm for general preference graphs with two agents.
Then we show that the problem is \NP-hard
for at least three agents, even for a rather simple class of preference graphs with no directed path of length two and no vertex of in-degree greater than $2$.
This hardness result for graphs of ``height'' two is complemented by a polynomial-time algorithm for preference graphs with width at most two for an arbitrary number of agents.
In our arguments we employ a close connection with the graph coloring problem (Section~\ref{sec:general}).

Preference statements often follow a hierarchical structure.
Thus, out-trees would be simple, but intuitive candidates for preference graphs.
The next conceptual level allows arbitrary directions on a tree.
Indeed, as a main result of the paper we present a linear-time algorithm for so-called polytrees, i.e., directed graphs whose underlying undirected graph is a tree (Section~\ref{sec:polytree}).
Note that this allows vertices with arbitrary in- and out-degree and implies preferences structures much more complex than simple directed out-trees.
The result also holds for a collection of polytrees, i.e., a polyforest.

Moving beyond tree structures, we then develop polynomial-time algorithms for classes of instances where the underlying undirected graph of the preference graph has treewidth at most two.
For $s,t$-series-parallel graphs, as well as for a directed version of cactus graphs, the problem can be solved for an arbitrary number of agents in polynomial time (Section~\ref{sec:treewidth2}).
Both of these graph classes generalize trees by allowing cycles, but without including complicated convolutions, which would be unlikely to appear in preference statements.

Some of the earlier results of this paper were included in the proceedings paper~\cite{adt2021}.

\subsection{Our related work}\label{sec:related}

In a companion paper~\cite{minmax}, we recently considered the same preference framework with a common preference graph but for a different objective function.
We studied a fairness criterion which minimizes the maximum dissatisfaction over all agents (or equivalently maximizes the minimum satisfaction).
In this setting, one wants to make the least satisfied agent as happy as possible without considering the total satisfaction.
It is shown in~\cite{minmax} that the same \PP\ vs.~\NP-complete complexity boundary between $k=2$ and $k\geq 3$ agents holds for that case.
Other polynomial results were derived for very simple graph structures, such as directed matchings and out-stars (see Section~\ref{sec:prelim} for formal definitions).
For out-trees, a polynomial result was given for a constant number of agents.
Furthermore, fixed parameter tractability results for a bounded number of agents were given for preference graphs that can be decomposed into path modules and independent set modules.
Compared to these findings, which employ algorithms quite different from the current paper, we show here that the minimization of total dissatisfaction permits polynomial algorithms for more general graphs.

Sticking to the utilitarian approach of minimizing the total dissatisfaction, it is also interesting to consider the case where each agent reports its own preference graph, as was done in~\cite{general}.
This general setting with individual preference graphs turned out to be much harder from a computational complexity perspective.

However, we could show that the problem
can be solved in polynomial time in each of the following cases~\cite{general}:
\begin{enumerate}[label*=(\arabic*)]
    \item if each preference graph is a disjoint union of paths,
    \item for two agents, if each preference graph is a disjoint union of out-stars, and
    \item for any constant number $k$ of agents if the underlying undirected graph of the union of the preference graphs has bounded treewidth $t$ (see~\cite[Theorem 16]{general} for details); in fact, for this case a fixed-parameter tractable algorithm was given with respect to $k+t$.
\end{enumerate}
A fortiori, these cases are also polynomial-time solvable for common preference graphs.
On the other hand, in the case of individual preference graphs the problem of minimizing the total dissatisfaction was shown in~\cite{general} to be \NP-complete in the case when each preference graph is an out-tree, as well as for two agents, even if the two sets of items are the same and no preference graph contains a two-edge directed path.
As a consequence of the results of the present paper, both of these intractable cases turn out to be solvable in polynomial time in the case of a common preference graph.

\section{Preliminaries}\label{sec:prelim}

For the sake of consistency, definitions and notation are mostly taken from~\cite{general}.
The description coincides with the framework in the companion paper~\cite{minmax}.

We consider directed and undirected finite graphs without loops or multiple edges.
For brevity, we often say \textit{graph} when referring to a directed graph.
A \emph{weakly connected component} of a directed graph $G$ is the subgraph of $G$ induced by the vertex set of a connected component of the underlying undirected graph of $G$.

Consider a directed graph $G=(V,A)$, where the $n:=|V|$ vertices correspond to $n$ items.
For $a=(u,v)\in A$, vertex $u$ is called the \emph{tail} of $a$ and vertex $v$ is called the \emph{head} of $a$.
The \textit{in-degree} of a vertex $u$ is the number of arcs in $A$ for which $u$ is the head and the \textit{out-degree} of $u$ is the number of arcs in $A$ for which $u$ is the tail.
The \textit{degree} of a vertex $u$ is the number of arcs in $A$ for which $u$ is either the head or the tail.
A vertex with in-degree $0$ is called a \textit{source}.
A vertex with out-degree $0$ is called a \textit{sink}.

A sequence $p=(v_0,v_1,v_2,\ldots,v_\ell)$ with $\ell \ge 0$
and $(v_i,v_{i+1})\in A$ for each $i\in \{0,\ldots,\ell-1\}$ is called a \textit{walk} of length $\ell$ from $v_0$ to $v_\ell$; it is a \textit{path} (of length $\ell$ from $v_0$ to $v_\ell$) if all its vertices are pairwise distinct.
A walk from  $v_0$ to $v_\ell$ is \textit{closed} if $v_0 = v_\ell$.
A \textit{cycle} is a closed walk of positive length in which all vertices are pairwise distinct, except that $v_0 = v_\ell$.

A \textit{directed acyclic graph} is a directed graph with no cycle. Observe that in a directed acyclic graph there is always at least one source and at least one sink.
An  \textit{out-tree} is a directed acyclic graph $G=(V,A)$ with a dedicated vertex $r$ (called the \textit{root}) such that for each vertex $v\in V\setminus\{r\}$ there is exactly one path from $r$ to $v$.
%An \textit{out-star} is an out-tree in which each such path is of length $1$.
An \textit{out-forest} is a disjoint union of out-trees.
A directed acyclic graph whose underlying undirected graph is a tree (resp., a forest) is called a {\em polytree} (resp., a {\em polyforest}).

\medskip
A directed acyclic graph $G=(V,A)$ induces a binary relation $\succ$ over $V$, defined by setting $u\succ v$ if and only if $u\neq v$ and there is a directed path from $u$ to $v$ in $G$; in such a case, we say that the agents prefer $u$ to $v$.
In particular, a directed acyclic graph induces a strict partial order over $V$.

For $v\in V$ let ${\it \pred}(v)$ denote the set of predecessors of $v$, i.e., the set of all vertices $u\neq v$ such that there is a path from $u$ to $v$ in $G$.
In addition, let ${\it \succc}(v) \subseteq V$ denote the set of successors of $v$ in graph $G$, i.e., the set of all vertices $u\neq v$ such that there is a path from $v$ to $u$ in $G$.
We denote by $\pred[v]$ the set $\pred(v) \cup \{v\}$; similarly we denote by $\text{succ}[v]$ the set $\text{succ}(v) \cup \{v\}$.
For $u,v \in V$ we say that item $u$ is \textit{dominated} by item $v$ if $v\in {\pred}[u]$.

For a directed acyclic graph $G = (V,A)$ we denote by $N^-(v)$ the set of all in-neighbors of $v \in V$, formally $N^-(v)=\{u\in V\colon (u,v)\in A\}$, and similarly by $N^+(v)$ the set of all out-neighbors of $v$, that is $N^+(v)=\{u\in V\colon (v,u)\in A\}$.

An \emph{antichain} in a directed acyclic graph $G$ is a set of vertices that are pairwise unreachable from each other.
The \emph{width} of $G$ is the maximum cardinality of an antichain in $G$.

\medskip
%As stated before,
We are given a set of $k$ agents denoted by $K=\{1,\ldots,k\}$.
An \textit{allocation} $\pi$ is a function $K \rightarrow  2^V$ that assigns to the agents pairwise disjoint sets of items, i.e., for $i,j \in K$, $i\not=j$, we have $\pi(i)\cap\pi(j)=\emptyset$.
Every allocation also implies a \textit{labeling} $\lambda_\pi$ of items, which is a function $V \rightarrow K\cup\{\NULL\}$ assigning to every item $v \in V$ allocated to agent $i$ the label $\lambda_\pi(v)=i$ and to items $v$ not allocated to any agent the label $\lambda_\pi(v)=\NULL$.
For a set of items $U \subseteq V$ we denote by $\lambda_\pi(U) = \{\lambda_\pi(v)\in K\colon v \in U\}$ the set of agents receiving items from~$U$.
%Since "labelling" is British and "labeling" is American, we stick to the later.}

As described before, we measure the attractiveness of an allocation by counting the number of items that an agent \emph{does not receive} and for which it receives no other more preferred item.
Formally, for an allocation $\pi$  the \textit{dissatisfaction} $\diss_{\pi}(i)$ of agent $i$ is defined as the number of items in $G$ not dominated by any item in $\pi(i)$.
A \textit{dissatisfaction profile} is a $k$-tuple $(d_i\colon i\in K)$ with $d_i \in \nat_0$ for all $i\in K$ such that there is an allocation $\pi$ with  $\diss_{\pi}(i)=d_i$ for each $i\in K$.

If $\pi$ allocates a vertex $v$ to agent $i$, it does not change the dissatisfaction $\diss_{\pi}(i)$ if any vertices in $\text{succ}(v)$ are allocated to $i$ in addition to $v$.
Thus, we say that if $v \in \pi(i)$, then agent $i$ \textit{dominates} all vertices in $\text{succ}[v]$.
It is also convenient to define the \textit{satisfaction} $s_\pi (i)$ of agent $i$ with respect to allocation $\pi$
as the number of items in $V$ that are dominated by $i$.

Continuing the work of \cite{general} we consider the minimization of the
total dissatisfaction among the agents.

\medskip
\probdef[Question]{A set $K$ of agents, a set $V$ of items, a  directed  acyclic graph $G =(V,A)$, and an integer $d$.}{Is there an allocation $\pi$ of items to agents such that the total dissatisfaction $\sum_{i\in K} \diss_{\pi}(i)$ is at most $d$?}{\textsc{Min-Sum Dissatisfaction with a Common Preference Graph\\(\MINSUM)}}{}{}

Sometimes we refer to the directed acyclic graph $G$ representing the agents' preferences as the \textit{common graph}.
The corresponding optimization problem asks for an \textit{optimal allocation}, i.e., an allocation that minimizes the total dissatisfaction of all agents.
As pointed out before, minimizing the total dissatisfaction of all agents is equivalent to maximizing the total satisfaction.
We use the former formulation for reasons of consistency with previous works~\cite{adt2021,minmax,general}.

Throughout the paper, for our polynomial-time results we provide constructive proofs, i.e., whenever we prove that the decision problem  {\MINSUM} can be decided in polynomial time we actually determine an optimal allocation in polynomial time.

\begin{lemma}\label{th:kgreatern}
% The {\MINSUM} problem for a common graph $G$ with $n$ vertices and any set $K$ of  $k \ge n$ agents has a canonical solution independently from $G$.
The {\MINSUM} problem for a common graph $G = (V,A)$ with $n$ vertices and any set $K$ of $k \ge n$ agents has a canonical optimal solution independent of $A$, obtained by assigning each vertex of $G$ to a different agent.
\end{lemma}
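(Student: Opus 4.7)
The plan is to argue via the dual (total satisfaction) and provide a per-item upper bound on the number of agents that can dominate any given item. Let $\pi$ be any allocation. For each item $v\in V$, define
\[
c_\pi(v)=|\{i\in K : v\in \text{succ}[u]\text{ for some } u\in \pi(i)\}|,
\]
i.e., the number of agents who dominate $v$. Then the total satisfaction satisfies $\sum_{i\in K} s_\pi(i)=\sum_{v\in V} c_\pi(v)$, since each item contributes $1$ to $s_\pi(i)$ exactly when $i$ dominates it. Equivalently, total dissatisfaction equals $kn-\sum_{v\in V} c_\pi(v)$, so minimizing dissatisfaction is the same as maximizing $\sum_{v\in V} c_\pi(v)$.

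The key observation is the upper bound $c_\pi(v)\le |\pred[v]|$ for every $v\in V$. Indeed, an agent $i$ dominates $v$ precisely when $\pi(i)\cap \pred[v]\neq\emptyset$; since the sets $\pi(i)$ are pairwise disjoint, distinct agents dominating $v$ must pick up distinct items from $\pred[v]$, so the number of such agents is at most $|\pred[v]|$. Summing over all items gives the universal bound
\[
\sum_{i\in K}s_\pi(i)=\sum_{v\in V} c_\pi(v)\;\le\;\sum_{v\in V}|\pred[v]|.
\]

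Now I would verify that the canonical allocation $\pi^*$ that assigns each vertex to a different agent (possible because $k\ge n$) attains this upper bound. For each $v\in V$, an agent dominates $v$ under $\pi^*$ if and only if it was allocated some vertex $u\in\pred[v]$; since each item was handed to a distinct agent, the number of such agents is exactly $|\pred[v]|$. Hence $c_{\pi^*}(v)=|\pred[v]|$ for every $v$, the bound is tight, and $\pi^*$ is optimal. The allocation $\pi^*$ itself is described without any reference to the arc set $A$, which gives the claimed independence from $A$; only the resulting dissatisfaction value depends on $A$.

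The main, and essentially only, subtlety is the counting argument that yields $c_\pi(v)\le |\pred[v]|$ — it is what makes the proof work — but once one recasts the objective item-wise as $\sum_v c_\pi(v)$, it falls out immediately from the disjointness of the bundles $\pi(i)$.
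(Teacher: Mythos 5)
Your proof is correct, but it takes a genuinely different route from the paper's. The paper proves this lemma by a local exchange argument: starting from any optimal allocation in which some agent holds two or more items, it moves one item $v$ to an empty-handed agent $j$ (which must exist since $k\ge n$), observing that $j$'s dissatisfaction drops by exactly $|\succc[v]|$ while the donor's rises by at most $|\succc[v]|$, and iterates until every agent holds at most one item. You instead establish a global certificate of optimality: recasting total satisfaction item-wise as $\sum_{v}c_\pi(v)$, bounding $c_\pi(v)\le|\pred[v]|$ via disjointness of the bundles, and checking that the canonical allocation attains this bound with equality for every $v$. Your argument is essentially the paper's later Lemma~\ref{lem:boundSum} together with the notion of a good allocation (Definition~\ref{def:good}, Lemma~\ref{good allocation result}), specialized to $k\ge n$, where $|\pred[v]|\le n\le k$ makes the truncation $\max\{k-|\pred[v]|,0\}$ vacuous; the paper simply has not yet developed that machinery at the point where this lemma appears. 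What your approach buys is a self-certifying optimality proof that needs no appeal to the existence of an optimal solution or to a termination argument for the exchange process; what the paper's approach buys is brevity and independence from the Section~\ref{sec:general} apparatus. Both are sound.
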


\begin{proof}
Consider any optimal allocation $\pi$ violating the property of the Lemma.
Then there is an agent $i \in K$ with more than one assigned vertex, and thus there must be an agent $j \in K$ with no vertex at all, since $k \geq n$.
Now we reassign an arbitrary vertex $v\in \pi(i)$ from $i$ to $j$.
This decreases the dissatisfaction of $j$ by $|\textrm{succ}[v]|$ and increase the dissatisfaction of $i$ by \emph{at most} $|\textrm{succ}[v]|$, since $i$ may still be assigned vertices dominating some vertices in $\textrm{succ}[v]$.
Hence, in total this reassignment does not increase the total dissatisfaction.
This exchange step can be performed until each vertex is assigned to a different agent.

\end{proof}

Following \cref{th:kgreatern}, we assume that $k<n$ for the rest of the paper.

\medskip
\begin{observation}\label{obs:disconnected}
Let $\{G_1,\ldots, G_m\}$ be a set of pairwise vertex-disjoint graphs.
If {\MINSUM} can be solved in polynomial (resp., linear) time on each graph $G_j$, then {\MINSUM} can be solved in polynomial (resp., linear) time on $G=\bigcup_{j=1}^m G_j$.
\end{observation}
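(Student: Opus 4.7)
The plan is to exploit the fact that dissatisfaction decomposes additively across weakly connected components of the preference graph, so that an optimal allocation on $G$ can be assembled from optimal allocations on each $G_j$ independently.

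First I would observe that since the graphs $G_1, \ldots, G_m$ are pairwise vertex-disjoint, the graph $G = \bigcup_{j=1}^m G_j$ contains no arcs between vertices of distinct components. Consequently, for any $v \in V(G_j)$, the set $\pred[v]$ (computed in $G$) is entirely contained in $V(G_j)$. In particular, for any allocation $\pi$ of $V(G)$ to agents in $K$ and any agent $i\in K$, an item $v\in V(G_j)$ is dominated by some item in $\pi(i)$ if and only if it is dominated by some item in $\pi(i)\cap V(G_j)$.

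Next I would formalize the decomposition. For $j\in\{1,\ldots,m\}$, let $\pi_j\colon K\to 2^{V(G_j)}$ be the restriction of $\pi$ defined by $\pi_j(i)=\pi(i)\cap V(G_j)$. By the preceding observation,
\[
\diss_\pi(i) \;=\; \sum_{j=1}^m \diss_{\pi_j}(i) \qquad \text{for every } i\in K,
\]
and summing over $i\in K$ and exchanging the order of summation yields
\[
\sum_{i\in K} \diss_\pi(i) \;=\; \sum_{j=1}^m \sum_{i\in K} \diss_{\pi_j}(i).
\]
Conversely, given any allocations $\pi^{(j)}$ of $V(G_j)$ to $K$ for $j=1,\ldots,m$, the map $\pi(i)=\bigcup_{j=1}^m \pi^{(j)}(i)$ is an allocation of $V(G)$ whose total dissatisfaction equals $\sum_{j=1}^m \sum_{i\in K}\diss_{\pi^{(j)}}(i)$.

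It follows that the minimum total dissatisfaction on $G$ equals the sum of the minimum total dissatisfactions on the individual $G_j$, and an optimal allocation on $G$ is obtained by taking the union (per agent) of optimal allocations on the $G_j$. If an algorithm solves {\MINSUM} on each $G_j$ in time $f(|V(G_j)|)$ with $f$ polynomial (resp.\ linear), then running it on each component and merging the outputs takes time $O\!\left(\sum_{j=1}^m f(|V(G_j)|)\right)$, which is polynomial (resp.\ linear) in $|V(G)|$. There is no real obstacle here; the only point worth stating carefully is the additivity of $\diss_\pi$ across components, which is immediate from the absence of cross-component arcs.
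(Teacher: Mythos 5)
Your proposal is correct and follows essentially the same route as the paper's proof: both rest on the additivity of dissatisfaction across vertex-disjoint components, obtained by restricting an allocation to each $G_j$ and, conversely, merging per-component optimal allocations by taking unions per agent. The only cosmetic difference is that you argue the equality of optima directly, whereas the paper phrases the optimality step as a short proof by contradiction.
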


\begin{proof}
For each $j\in \{1,\ldots, m\}$, let $\pi_j$ be an optimal allocation for {\MINSUM} on $G_j$.
Then, we claim that assigning items $\bigcup_{j=1}^m \pi_j(i)$ to every agent $i \in K$ gives an optimal allocation $\pi$ for {\MINSUM} on $G$.
Since the graphs are pairwise vertex-disjoint, for each agent $i \in K$ we have that $\delta_\pi(i) = \sum_{j= 1}^m\delta_{\pi_j}(i)$.
Consequently, the total dissatisfaction of $\pi$ equals $\sum_{i \in K}\sum_{j= 1}^m\delta_{\pi_j}(i)= \sum_{j= 1}^m\sum_{i \in K}\delta_{\pi_j}(i)$.
Assume for a contradiction that there exists an allocation $\pi^*$ that has lower dissatisfaction than $\pi$.
Consider for all $j\in \{1,\ldots, m\}$ the allocation $\pi^*_j$ for {\MINSUM} on $G_j$ defined by $\pi^*_j(i) = \pi^*(i)\cap V(G_j)$ for all $i\in K$.
Then by our assumption $\sum_{j= 1}^m\sum_{i \in K}\delta_{\pi^*_j}(i)< \sum_{j= 1}^m\sum_{i \in K}\delta_{\pi_j}$.
Hence, there exists some $j\in \{1,\ldots, m\}$ such that $\sum_{i \in K}\delta_{\pi^*_j}(i)<\sum_{i \in K}\delta_{\pi_j}(i)$, contradicting the optimality of $\pi_j$ on $G_j$.
\end{proof}

\medskip
Many of our proofs employ the following general lemma on the solutions to {\MINSUM}.
(A similar lemma holds for the problem of minimizing the maximum dissatisfaction; see~\cite{minmax}.)

\begin{lemma}\label{structure-of-optimal-solutions}
When solving the {\MINSUM} problem for a common graph $G$ with $n$ vertices and any set $K$ of $k$ agents such that $k < n$, we may without loss of generality restrict our attention to allocations $\pi$ such that for each agent $i\in K$, the set of items allocated to agent $i$ forms a nonempty antichain in $G$.
\end{lemma}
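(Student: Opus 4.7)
The plan is to start with an arbitrary optimal allocation $\pi$ for \MINSUM{} and to apply two rewriting steps, each of which does not increase the total dissatisfaction, so as to end with an allocation whose bundles are all nonempty antichains.

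The first step, \emph{antichain reduction}, replaces each bundle $\pi(i)$ by the subset $M_i \subseteq \pi(i)$ consisting of those vertices of $\pi(i)$ that are not dominated by any other vertex of $\pi(i)$. Since every $u \in \pi(i) \setminus M_i$ lies in $\succc(v)$ for some $v \in M_i$, we have $\bigcup_{v \in M_i}\succc[v] = \bigcup_{v \in \pi(i)}\succc[v]$, so $\diss_\pi(i)$ is preserved. By construction $M_i$ is an antichain, and $M_i$ is nonempty whenever $\pi(i)$ was. Vertices removed from some bundle simply become unallocated, which does not affect any agent's dissatisfaction, since dissatisfaction depends only on the agent's own bundle.

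The second step, \emph{filling empty bundles}, is applied iteratively: while some agent $i$ has $\pi(i) = \emptyset$, I would pick an agent $j$ with $|\pi(j)| \ge 2$, choose any $u \in \pi(j)$, and move $u$ to $i$ by setting $\pi(j) := \pi(j) \setminus \{u\}$ and $\pi(i) := \{u\}$. The new bundle $\pi(j) \setminus \{u\}$ remains an antichain, $\{u\}$ is trivially an antichain, and the number of empty bundles strictly decreases. Comparing $\bigcup_{v \in \pi(j)} \succc[v]$ before and after the move shows that the net change in total dissatisfaction equals $-\bigl|\succc[u] \cap \bigcup_{v \in \pi(j)\setminus\{u\}} \succc[v]\bigr|$, which is nonpositive.

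The main obstacle is to justify that an agent $j$ with $|\pi(j)| \ge 2$ always exists while some bundle is empty, and here I would exploit the optimality of the current allocation together with the standing hypothesis $k < n$: if instead every $j \ne i$ satisfied $|\pi(j)| \le 1$, then at most $k - 1 < n$ vertices would be allocated, so some vertex $v \in V$ would be unallocated; assigning $\pi(i) := \{v\}$ would strictly decrease $\diss_\pi(i)$ by $|\succc[v]| \ge 1$ while leaving every other agent's dissatisfaction unchanged, contradicting optimality. Since each iteration of the filling step strictly reduces the number of empty bundles, the procedure terminates after at most $k$ steps at an allocation of the required form.
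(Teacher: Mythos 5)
Your proof is correct and follows essentially the same two-step strategy as the paper: first prune each bundle to its set of undominated (maximal) elements, then redistribute items so that no bundle is empty, using $k<n$ to guarantee either an unallocated vertex or an agent holding at least two items. The only difference is that the paper transforms an \emph{arbitrary} allocation without ever increasing the total dissatisfaction (it assigns unallocated vertices directly to empty agents, rather than invoking optimality to exclude that case as you do), but both variants suffice for the ``without loss of generality'' claim.
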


\begin{proof}
Given an agent $i\in K$ such that $\pi(i)$ does not form an antichain in $G$, from $\pi(i)$ we can remove any item  that is dominated by some other item in $\pi(i)$ without changing the dissatisfaction of agent $i$.
Hence, by repeating this process if necessary, we end up with an allocation $\pi$ such that for each agent $i\in K$, the set of items  $\pi(i)$ is an antichain in $G$.

We now argue that we may assume that all these antichains are nonempty.
Let $\pi$ be an allocation such that for each agent $i\in K$, the set of items allocated to agent $i$ forms an antichain in $G$.
Assume that at least one of these antichains, say $\pi(j)$, for $j\in K$, is empty.
If not all vertices in $G$ are allocated, then we can assign any not yet allocated vertex of $G$ to agent $j$, thus strictly decreasing the total dissatisfaction.
Repeating this procedure if necessary, we may assume that all vertices in $G$ are allocated.
If at least one of these antichains, say $\pi(j)$, is still empty, then, since all the $n$ vertices in $G$ are allocated to at most $k-1\le n-1$ agents, there exists an agent $i\in K\setminus\{j\}$ such that $\{x,y\}\subseteq \pi(i)$ for two distinct vertices $x$ and $y$ of $G$.
Consider now the allocation that removes $y$ from $\pi(i)$ and assigns $\{y\}$ to agent $j$.
Then, the total dissatisfaction of this new allocation does not exceed the total dissatisfaction of the previous allocation.
Repeating this procedure if necessary, we eventually reach an allocation such that the set of items allocated to each agent forms a nonempty antichain in $G$.
\end{proof}

\section{General results for {\MINSUM}}\label{sec:general}

In~\cite{adt2021}, it was shown that {\MINSUM} is polynomial-time solvable if there are only two agents and becomes \NP-complete for $k=3$ agents, even if the graph has no directed path of length two and no vertex of in-degree larger than $2$.
This implies that the polynomial-time algorithm for $k=2$ is best possible regarding the number of agents, unless \textsf{P} = \NP{}.
The \NP-completeness proof for $k=3$ in~\cite{adt2021} is based on a fairly involved reduction from \textsc{$3$-Edge-Coloring}.

We now give shorter proofs of both results.
The proofs are based on the following lower bound on the optimal total dissatisfaction value and a characterization of allocations achieving this lower bound, which we use in several other proofs as well.

\begin{lemma}\label{lem:boundSum}
Given a preference graph $G=(V,A)$, for an arbitrary number $k$ of agents and any allocation $\pi$ the total dissatisfaction is bounded by
\begin{equation}\label{inequality-lower-bound}
\sum_{i \in K}\delta_\pi(i)\geq \sum_{v \in V}\max\{k-|\pred[v]|,0\}\,.
\end{equation}
\end{lemma}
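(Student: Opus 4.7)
The plan is to swap the order of summation, counting for each item $v$ the number of agents that fail to dominate it, rather than counting per agent. Recall that agent $i$ dominates item $v$ under $\pi$ precisely when $\pi(i)$ contains some element of $\pred[v]$ (i.e., $v$ itself or some predecessor of $v$); otherwise $v$ contributes $1$ to $\delta_\pi(i)$. Therefore
$$\sum_{i \in K}\delta_\pi(i) \;=\; \sum_{v \in V} \bigl|\{i \in K : \pi(i) \cap \pred[v] = \emptyset\}\bigr|.$$

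Now I would fix a vertex $v$ and bound the inner quantity from below. Since the sets $\pi(i)$, $i\in K$, are pairwise disjoint subsets of $V$, each item in $\pred[v]$ can be assigned to at most one agent. Consequently, the number of agents $i$ with $\pi(i)\cap \pred[v] \neq \emptyset$ is at most $|\pred[v]|$, so the number of agents with $\pi(i)\cap \pred[v] = \emptyset$ is at least $k - |\pred[v]|$, which, being a count, is also at least $0$; together, at least $\max\{k-|\pred[v]|,0\}$.

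Summing this inequality over all $v \in V$ then yields inequality~\eqref{inequality-lower-bound}. There is no real obstacle here — the only thing worth emphasizing is why the disjointness of the $\pi(i)$'s is exactly what is needed: it converts the trivial bound "at most $k$ agents dominate $v$" into the sharper bound "at most $|\pred[v]|$ agents dominate $v$", and this sharper bound is what drives the per-vertex deficit $k-|\pred[v]|$.
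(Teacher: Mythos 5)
Your proof is correct and follows essentially the same route as the paper's: both swap the summation to count, per vertex $v$, the agents failing to dominate it, and both use the pairwise disjointness of the sets $\pi(i)$ to bound the number of dominating agents by $|\pred[v]|$. Your write-up is, if anything, slightly more explicit about the exchange of summation, but there is no substantive difference.
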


\begin{proof}
We estimate the contribution of each vertex to the dissatisfaction separately.
There are $|\pred[v]|$ vertices that dominate vertex $v$.
Hence, since the sets of items assigned by $\pi$ to different agents are pairwise disjoint, at most $|\pred[v]|$ agents will be assigned an item that dominates vertex~$v$.
Consequently, at least $k-|\pred[v]|$ agents will not be assigned any item that dominates vertex~$v$.
In cases where the number of predecessors is large and the above number is negative the bound can clearly be improved to $\max\{k-|\pred[v]|,0\}$.
\end{proof}
Using topological sort (see, e.g.,~\cite{zbMATH07646025}), we can easily calculate $|\pred[v]|$ for all $v\in V$ in linear time.
In turn, $\sum_{v \in V}\max\{k-|\pred[v]|,0\}$ can also be calculated in linear time.

\medskip
Allocations for which the lower bound from~\eqref{inequality-lower-bound} is satisfied with equality can be characterized using the following definition.

\begin{definition}\label{def:good}
Given a preference graph $G=(V,A)$, any set $K$ of agents, and a set $S\subseteq V$, an allocation $\pi$ and its corresponding labeling $\lambda_\pi$ is said to be \emph{$S$-good} (with respect to $K$) if for each vertex $v\in S$ one of the following two conditions is satisfied:
\begin{itemize}
    \item $v$ is dominated by all agents in $K$, or
    \item the labels $\lambda_\pi(u)$ for $u \in \pred[v]$ are pairwise distinct and not $\NULL$.
\end{itemize}
A \emph{good} allocation is an allocation that is $V$-good.
\end{definition}

\begin{lemma}\label{good allocation result}
Given a preference graph $G=(V,A)$ for an arbitrary set $K$ of agents, an allocation $\pi$ achieves the lower bound from~\eqref{inequality-lower-bound} if and only if it is a good allocation.
%{\color{red} Thus, any good allocation is optimal for {\MINSUM}.}
\end{lemma}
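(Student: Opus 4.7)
The plan is to refine the counting argument already used to obtain the lower bound in \cref{lem:boundSum}, vertex by vertex, and then read off exactly what equality at each vertex means.

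First I would rewrite the total dissatisfaction by swapping the order of summation. For an allocation $\pi$, let
\[
f_\pi(v) \;=\; |\{i\in K \colon \pi(i)\cap \pred[v] = \emptyset\}|,
\]
i.e.\ the number of agents that do not dominate $v$. By definition of $\delta_\pi(i)$, one has
\[
\sum_{i\in K}\delta_\pi(i) \;=\; \sum_{v\in V} f_\pi(v),
\]
and the proof of \cref{lem:boundSum} in fact shows the vertex-wise bound $f_\pi(v)\ge \max\{k-|\pred[v]|,0\}$. Hence the total bound in~\eqref{inequality-lower-bound} is tight if and only if each of these vertex-wise bounds is tight.

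Next I would characterize tightness at a single vertex $v$, splitting into two cases according to whether $|\pred[v]|\ge k$ or $|\pred[v]|<k$. If $|\pred[v]|\ge k$, the bound is $0$, so tightness means that every agent in $K$ dominates $v$; this is precisely the first bullet of \cref{def:good}. If $|\pred[v]|<k$, tightness means that exactly $|\pred[v]|$ agents dominate $v$. Since different agents receive disjoint bundles, at most one distinct agent can be ``credited'' for dominating $v$ per element of $\pred[v]$; so the maximum number of dominating agents is $|\pred[v]|$, and this maximum is attained if and only if the vertices of $\pred[v]$ are all allocated and carry pairwise distinct labels under $\lambda_\pi$. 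This is exactly the second bullet of \cref{def:good}.

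Combining the two cases, the bound is met at $v$ if and only if the corresponding condition of \cref{def:good} holds at $v$; summing over $v\in V$ yields that equality in~\eqref{inequality-lower-bound} is equivalent to $\pi$ being $V$-good, i.e., good. The only step requiring a little care is the case $|\pred[v]|<k$: one has to argue both directions, namely that distinct labels on $\pred[v]$ force $|\pred[v]|$ distinct dominating agents (immediate) and, conversely, that $|\pred[v]|$ distinct dominating agents force every element of $\pred[v]$ to be allocated and to carry a distinct label. The latter follows because any agent dominating $v$ must contain some element of $\pred[v]$ in its bundle, the bundles are pairwise disjoint, and the pigeonhole principle then forces a bijection between $\pred[v]$ and the dominating agents. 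I do not foresee any genuine obstacle; the only subtlety is to keep the two clauses of \cref{def:good} aligned with the two regimes $|\pred[v]|\ge k$ and $|\pred[v]|<k$.
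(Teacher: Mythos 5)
Your proposal is correct and follows essentially the same route as the paper: both swap the order of summation to reduce equality in~\eqref{inequality-lower-bound} to vertex-wise tightness of $f_\pi(v)\ge\max\{k-|\pred[v]|,0\}$, and then match the two regimes $|\pred[v]|\ge k$ and $|\pred[v]|<k$ to the two bullets of \cref{def:good}. Your write-up is slightly more explicit about the converse in the second regime (the disjointness/pigeonhole argument forcing a bijection between $\pred[v]$ and the dominating agents), which the paper states more tersely, but the underlying argument is identical.
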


\begin{proof}
Let $k=|K|$.
Assume that $\pi$ is good and consider an arbitrary vertex $v$ of $G$.
If $|\pred[v]|\le k$, then all vertices in $\pred[v]$ are assigned to different agents, since $\pi$ is $\{v\}$-good.
Thus, the contribution of $v$ to the total  dissatisfaction of $\pi$ is exactly $k-|\pred[v]| = \max\{k-|\pred[v]|,0\}$.
If $|\pred[v]| > k$, then each agent $i\in K$ receives an item that dominates $v$, since $\pi$ is $\{v\}$-good.
Therefore, the contribution of $v$ to the total dissatisfaction of $\pi$ is $0 = \max\{k-|\pred[v]|,0\}$.
It follows that the total dissatisfaction of $\pi$ meets the lower bound given by \cref{lem:boundSum}, and hence $\pi$ is optimal.

Conversely, assume an allocation $\pi$ meets the bound from \cref{lem:boundSum}.
Then it must hold that for every $v\in V$ we have $\max\{k-|\pred[v]|,0\}$ agents that do not dominate it.
If $\max\{k-|\pred[v]|,0\}=0$, then $v$ is dominated by all agents in $K$, implying $\pi$ is $\{v\}$-good.
If $\max\{k-|\pred[v]|,0\} \neq 0$, then there are $k-|\pred[v]|>0$ agents that do not dominate $v$, implying that $|\pred[v]|<k$ agents dominate $v$, hence $\lambda_\pi(u)$ for $u \in \pred[v]$ are pairwise distinct and not $\NULL$, implying $\pi$ is $\{v\}$-good.
\end{proof}

\Cref{lem:boundSum,good allocation result} immediately imply the following.

\begin{corollary}\label{good-implies-optimal}
For any preference graph $G$ and any number of agents, any good allocation is optimal for {\MINSUM}, that is, it minimizes the total dissatisfaction.
\end{corollary}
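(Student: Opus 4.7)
The plan is essentially to glue together the two preceding lemmas, so there is no real obstacle; the corollary is a one-line consequence.

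First I would fix an arbitrary good allocation $\pi$. By \cref{good allocation result}, since $\pi$ is good, it attains the lower bound with equality, i.e.\
\[
\sum_{i \in K}\delta_\pi(i) \;=\; \sum_{v \in V}\max\{k-|\pred[v]|,0\}\,.
\]
Next I would invoke \cref{lem:boundSum}, which states that \emph{every} allocation $\pi'$ satisfies
\[
\sum_{i \in K}\delta_{\pi'}(i) \;\geq\; \sum_{v \in V}\max\{k-|\pred[v]|,0\}\,.
\]
Combining these two statements yields $\sum_{i \in K}\delta_\pi(i)\leq \sum_{i \in K}\delta_{\pi'}(i)$ for every allocation $\pi'$, which is precisely the statement that $\pi$ minimizes the total dissatisfaction. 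Since $\pi$ was an arbitrary good allocation, the conclusion follows.

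The only subtlety is noting that we do not need the ``only if'' direction of \cref{good allocation result} at all; just the ``if'' direction suffices. Consequently the corollary genuinely is an immediate consequence, and I would simply write out the above two-line argument rather than attempt anything more elaborate.
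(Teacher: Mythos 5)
Your argument is correct and is exactly the paper's: the corollary is stated there as an immediate consequence of \cref{lem:boundSum} and \cref{good allocation result}, combined precisely as you do (only the ``if'' direction of the latter being needed). Nothing further is required.
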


\Cref{good-implies-optimal} implies that in any class of instances in which a good allocation exists,
the decision problem of {\MINSUM} can be solved in polynomial time by evaluating the right-hand side expression in \eqref{inequality-lower-bound} and comparing it to the target value $d$.
If the good allocation can be computed in polynomial time, as it is the case for all positive results in this paper, this constitutes also an optimal solution to the corresponding optimization problem.
Furthermore, the use of good allocations simplifies both proofs of the mentioned results from~\cite{adt2021}.
We first give a short proof that for two agents {\MINSUM} can be solved in linear time.

\begin{theorem}\label{th:2agents}
For $k=2$ and any preference graph $G = (V,A)$, {\MINSUM} can be solved in $\mathcal{O}(|V|+|A|)$ time.
\end{theorem}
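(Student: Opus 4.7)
The plan is to produce, in linear time, a good allocation in the sense of \cref{def:good}; by \cref{good-implies-optimal} any such allocation is automatically optimal for \MINSUM{}.

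First, I would specialize the good-allocation condition to $k=2$. For a source $v$ (so $\pred[v]=\{v\}$), the first clause of \cref{def:good} cannot hold and the second reduces to $\lambda_\pi(v)\in\{1,2\}$. For a non-source $v$ with $|\pred[v]|=2$, both clauses coincide and demand that the two vertices of $\pred[v]$ carry the two distinct labels $1$ and $2$. For a non-source $v$ with $|\pred[v]|\ge 3$, the second clause is impossible with only two non-NULL labels, so only the first survives, demanding that $\pred[v]$ contain a vertex of each label. Hence it suffices to assign a label in $\{1,2\}$ to every vertex so that for every non-source $v$ the ancestor-closure $\pred[v]$ contains both labels.

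Second, I would construct such a labelling by a single topological sweep. Compute a topological order of $G$ in $O(|V|+|A|)$ time (e.g., by Kahn's algorithm), and process vertices in this order, maintaining for each processed $v$ a two-bit set $L[v]\subseteq\{1,2\}$ recording the labels appearing in $\pred[v]$. At vertex $v$, set $L_{\mathrm{pred}}(v):=\bigcup_{u\in N^-(v)}L[u]$ in time $O(|N^-(v)|)$. If $v$ is a source, put $\lambda(v):=1$; otherwise, if $L_{\mathrm{pred}}(v)=\{a\}$ for a single $a\in\{1,2\}$ put $\lambda(v):=3-a$, and if $L_{\mathrm{pred}}(v)=\{1,2\}$ put $\lambda(v):=1$. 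Finally set $L[v]:=L_{\mathrm{pred}}(v)\cup\{\lambda(v)\}$.

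For correctness and complexity, by induction along the topological order the identity $\pred[v]=\{v\}\cup\bigcup_{u\in N^-(v)}\pred[u]$ gives that the maintained $L[v]$ truly equals the set of labels appearing in $\pred[v]$. The assignment rule then guarantees $L[v]=\{1,2\}$ for every non-source $v$ (using that $L_{\mathrm{pred}}(v)$ is already non-empty for such $v$, since any in-neighbor contributes its own non-NULL label), while sources are labelled by construction; thus the labelling is good. Summing $O(|N^-(v)|+1)$ over all $v$ yields the claimed $O(|V|+|A|)$ bound. I do not anticipate any substantive obstacle: once the $k=2$ specialization of \cref{def:good} is unpacked, correctness of the greedy is a one-line induction, and the only ``engineering'' point is keeping the bit-set bookkeeping lean enough to remain linear.
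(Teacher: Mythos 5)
Your proposal is correct and follows essentially the same strategy as the paper: exhibit a good allocation (in the sense of \cref{def:good}) and invoke \cref{good-implies-optimal}. The paper's construction is just a more compact instance of yours --- it assigns the sources $S$ of $G$ to agent $1$, the sources of $G-S$ to agent $2$, and leaves all remaining vertices unallocated, which is already good for $k=2$; your topological sweep labels every vertex but establishes goodness by the same specialization of the definition.
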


\begin{proof}
Let $S$ be the set of sources of $G$ (these correspond to items that are not dominated by any other item).
Let $G'$ be the graph $G-S$ and let $S'$ be the set of sources of $G'$.
Assigning all vertices in $S$ to agent $1$ and all vertices in $S'$ to agent $2$, we obtain a good allocation, which is an optimal allocation for {\MINSUM} by \cref{good-implies-optimal}.
\end{proof}

Next, we show that {\MINSUM} is \NP-hard for any constant number $k\ge 3$ of agents, even on graphs with no directed path of length two, which we call one-way directed bipartite graphs.

Formally, a \emph{one-way directed bipartite graph} is a graph $G=(X \cup Y, A)$ whose vertex set is partitioned into two sets $X$ and $Y$ such that, for each arc $(x,y)\in A$, it holds that $x \in X$ and $y \in Y$.
Given a one-way directed bipartite graph $G = (X \cup Y, A)$ we denote by $G_X$ the \emph{$X$-graph of $G$}, that is, the undirected graph with vertex set $X$ and an edge $\{x,x'\}$ between two distinct vertices $x,x' \in X$ if and only if there exists a vertex $y \in Y$ and arcs $(x,y), (x',y) \in A$.

Using the concept of $X$-graphs we give the following partial characterization of when a one-way directed bipartite graph admits a good allocation, valid if the maximum in-degree of all vertices in $G$, denoted by $\Delta^{-}(G)$, is bounded in terms of the number of agents.

\begin{lemma}\label{lemma:bipartite-good-allocation-coloring}
For any one-way directed bipartite preference graph $G = (X \cup Y, A)$ and any number of agents $k$ such that $k>\Delta^{-}(G)$, the graph $G$ admits a good allocation if and only if its $X$-graph is $k$-colorable.
\end{lemma}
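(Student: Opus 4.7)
The plan is to simplify the ``good allocation'' condition in this setting, observing that because $k > \Delta^{-}(G)$, we have $|\pred[v]| \le k$ for every vertex $v$, so the first option in \cref{def:good} (the vertex $v$ being dominated by all $k$ agents) can occur only when $|\pred[v]|=k$ and in that case it becomes equivalent to the second option. Hence a good allocation reduces to an assignment $\lambda_\pi\colon V\to K$ (no $\NULL$ labels allowed for any vertex with $|\pred[v]|<k$, and one can check this forces every vertex to be labeled once $k\ge 2$; the $k=1$ case is trivial as then $\Delta^-(G)=0$ and $G_X$ has no edges) such that for every $y\in Y$, all labels on $\pred[y]=N^-(y)\cup\{y\}$ are pairwise distinct.

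For the forward direction, I would take a good allocation $\pi$ and define $c(x):=\lambda_\pi(x)$ for $x\in X$. The simplification above ensures $c$ is well-defined into $K=\{1,\dots,k\}$. To see that $c$ is a proper coloring of $G_X$, I would use the definition of $G_X$: an edge $\{x,x'\}$ of $G_X$ comes from a common out-neighbor $y\in Y$, so $\{x,x'\}\subseteq \pred[y]$, and the good-allocation condition applied to $y$ forces $\lambda_\pi(x)\ne \lambda_\pi(x')$.

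For the backward direction, given a proper $k$-coloring $c\colon X\to K$ of $G_X$, I would define $\lambda_\pi(x)=c(x)$ for $x\in X$. For every $y\in Y$ the vertices of $N^-(y)$ form a clique in $G_X$, so $c$ assigns them pairwise distinct colors; thus the set of labels used on $N^-(y)$ has size $|N^-(y)|\le \Delta^-(G)<k$, leaving at least one free color in $K$, and I would pick such a color as $\lambda_\pi(y)$. Finally, setting $\pi(i)=\{v\in V:\lambda_\pi(v)=i\}$ yields an allocation (the sets are automatically pairwise disjoint), which I would verify is good by checking both options of \cref{def:good}: for $x\in X$, $\pred[x]=\{x\}$ has a non-$\NULL$ label, and for $y\in Y$, by construction $\lambda_\pi(y)$ differs from every label on $N^-(y)$, so the labels on $\pred[y]$ are pairwise distinct.

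The only subtlety I anticipate is the bookkeeping around when a vertex must be labeled versus when it may be left unassigned; the key observation that $k>\Delta^-(G)$ makes the first option of \cref{def:good} collapse into the second for all $v\in V$ dispenses with this cleanly, after which both directions are essentially immediate from the definition of $G_X$ and the fact that every in-neighborhood $N^-(y)$ induces a clique in $G_X$.
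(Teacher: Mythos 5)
Your proof is correct and takes essentially the same route as the paper: restrict the labeling to $X$ to get a coloring (using that each $N^-(y)$ is a clique in $G_X$), and conversely extend a coloring of $X$ by giving each $y\in Y$ a color unused on its fewer-than-$k$ in-neighbors. Your preliminary observation that $k>\Delta^-(G)$ collapses the two cases of the goodness definition into one is a slightly more explicit version of the bookkeeping the paper does implicitly, and it is a valid and clean way to organize the argument.
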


\begin{proof}
    Given a $k$-coloring of the graph $G_X$ with colors $\{1,\ldots,k\}$ we construct a good allocation for $G$ as follows:
\begin{itemize}
    \item Vertices in $X$ colored $i$ are given to agent $i$.
    \item Vertices in $Y$ have less than $k$ parents, each assigned to a different agent since they are assigned based on a $k$-coloring of $G_X$.
    For each $y \in Y$, assign $y$ to an agent that is different from all the agents to which the parents of $y$ are assigned.
\end{itemize}
This assignment of vertices to agents is good, since for vertices in $X$ any allocation is good and for any vertex $y \in Y$ the parents of $y$ are assigned to pairwise distinct agents based on a $k$-coloring; since the in-degree of $G$ is less than $k$, vertex $y$ can be assigned to an agent distinct from the agents its parents are assigned to.

\medskip
For the converse direction, assume that $G$ admits a good allocation $\pi$ with labeling $\lambda_{\pi}$.
Since $\Delta^{-}(G) < k$, it holds that for every $y \in Y$ the parents of $y$ are assigned to pairwise distinct agents.
This implies that the labeling $\lambda_{\pi}$ restricted to $X$ is a $k$-coloring of the undirected graph $G_X$, since there is an edge $\{x,x'\} \in E(G)$ if and only if there exists a vertex in $Y$ that has both $x$ and $x'$ as a parent.
\end{proof}

\Cref{fig:notgood} gives an example that illustrates the requirement of $k$-colorability in \cref{lemma:bipartite-good-allocation-coloring}.

\begin{figure}
    \centering
    \begin{tikzpicture}[every node/.style={draw,circle,inner sep=0.8pt,minimum size=2mm}]
    \node[draw=none] at (5.8,1){$X$};
    \node[draw=none] at (5.8,0){$Y$};
\node (A1) at (1,1) {$1$};
\node (A2) at (2,1) {$2$};
\node (A3) at (3,1) {$3$};
\node (A4) at (4,1) {$4$};
\node (B12) at (0,0) {$3$};
\node (B13) at (1,0) {$4$};
\node (B14) at (2,0) {$2$};
\node (B23) at (3,0) {$1$};
\node (B24) at (4,0) {$1$};
\node (B34) at (5,0) {$2$};
\draw[->] (A1) -- (B12);
\draw[->] (A1) -- (B13);
\draw[->] (A1) -- (B14);
\draw[->] (A2) -- (B12);
\draw[->] (A2) -- (B23);
\draw[->] (A2) -- (B24);
\draw[->] (A3) -- (B13);
\draw[->] (A3) -- (B23);
\draw[->] (A3) -- (B34);
\draw[->] (A4) -- (B14);
\draw[->] (A4) -- (B24);
\draw[->] (A4) -- (B34);
\end{tikzpicture}
    \caption{A one-way directed bipartite graph $G$ with a good allocation for $k\geq4$, for which no good allocation for $k=3$ exists.
    The corresponding $X$-graph is the $4$-vertex complete graph, which is not $3$-colorable.}
    \label{fig:notgood}
\end{figure}

Using the connection between good allocations and $k$-colorings given by \cref{lemma:bipartite-good-allocation-coloring} we are able to prove the following hardness result.

\begin{theorem}\label{th:3agents}
For each $k\ge 3$, {\MINSUM} is \NP-complete, even if the preference graph is a one-way directed bipartite graph with maximum in-degree~$2$.
% has no directed path of length two and no vertex of in-degree larger than $2$.
\end{theorem}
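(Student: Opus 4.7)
The plan is to reduce from $k$-\textsc{Coloring} on general undirected graphs, which is \NP-hard for every fixed $k\ge 3$, and to use the characterization of good allocations via $X$-graph colorings supplied by \cref{lemma:bipartite-good-allocation-coloring} together with \cref{good-implies-optimal}. Membership in \NP{} is immediate, since given an allocation $\pi$ we can compute $\sum_{i\in K}\delta_\pi(i)$ in polynomial time and compare it to the target value $d$.

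For the hardness direction, fix $k\ge 3$ and start from an arbitrary undirected graph $H=(V_H,E_H)$. I would build a one-way directed bipartite graph $G=(X\cup Y,A)$ by taking $X:=V_H$ and, for every edge $e=\{u,v\}\in E_H$, introducing a fresh vertex $y_e\in Y$ together with the two arcs $(u,y_e)$ and $(v,y_e)$. By construction $G$ is acyclic (all arcs go from $X$ to $Y$), every vertex in $Y$ has in-degree exactly $2$, and every vertex in $X$ is a source, so $\Delta^{-}(G)=2<k$. Moreover, the $X$-graph $G_X$ is precisely $H$, because two distinct vertices $u,v\in X$ have a common out-neighbor in $Y$ if and only if $\{u,v\}\in E_H$.

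Set the target value $d:=\sum_{v\in V(G)}\max\{k-|\pred[v]|,0\}$, which equals $(k-1)\,|V_H|+\max\{k-3,0\}\,|E_H|$ and can be computed in polynomial time. By \cref{good allocation result}, an allocation achieves total dissatisfaction exactly $d$ if and only if it is good, and by \cref{lem:boundSum} no allocation achieves a smaller value. Applying \cref{lemma:bipartite-good-allocation-coloring} (whose hypothesis $k>\Delta^{-}(G)$ holds), $G$ admits a good allocation if and only if $G_X=H$ is $k$-colorable. Hence the constructed \MINSUM{} instance with target $d$ is a YES-instance if and only if $H$ is $k$-colorable, completing the reduction.

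The argument contains no hard technical step: the construction is linear in the size of $H$, the acyclicity and in-degree bounds are transparent, and all equivalences are read off directly from the previous lemmas. The only thing to be careful about is the direction of \cref{lemma:bipartite-good-allocation-coloring}: a \emph{good} allocation is tight for the lower bound \eqref{inequality-lower-bound}, while strictly less satisfied allocations are precisely the non-good ones, so the target $d$ cleanly separates YES from NO instances. Since $k$-\textsc{Coloring} is \NP-hard for each fixed $k\ge 3$ even on general graphs (and thus without any degree restriction on $H$), the reduction establishes \NP-completeness of \MINSUM{} on one-way directed bipartite graphs with maximum in-degree~$2$ for every $k\ge 3$.
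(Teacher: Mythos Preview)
Your proof is correct and follows essentially the same approach as the paper: the same subdivision construction from an instance $H$ of $k$-\textsc{Colorability}, the same identification $G_X=H$, and the same appeal to \cref{lemma:bipartite-good-allocation-coloring} combined with \cref{good allocation result} to equate ``$H$ is $k$-colorable'' with ``the \MINSUM{} instance with target $d$ equal to the lower bound is a YES-instance.'' The only cosmetic difference is that you spell out the value of $d$ explicitly, whereas the paper leaves it implicit in the phrase ``determining whether $G$ has a good allocation.''
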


\begin{proof}
Clearly, {\MINSUM} belongs to \NP{} for any number of agents and preference graphs.
Fix an integer $k\ge 3$.
To prove \NP-hardness, we show that determining whether a given preference graph $G$ has a good allocation is \NP-hard, even if $G$ is a one-way directed bipartite graph with maximum in-degree~$2$.
By \cref{good allocation result}, this implies the theorem.

We reduce from \textsc{\hbox{$k$-Colorability}}, which is the following problem: ``Given an undirected graph $H$, is $H$ $k$-colorable?''
This problem is well known to be \NP-hard (see, e.g.,~\cite{zbMATH03639144}).
Take an instance $H$ of \textsc{$k$-Colorability}.
Create a directed graph $G = (V,A)$ from $H$ by doing the following.
Insert all vertices in $H$ into $V$.
For each edge $e = \{u,v\}$ of $H$, add a new vertex $w_e$ and insert arcs $(u,w_e),(v,w_e)$ in $A$.
In words, subdivide every edge of $H$ and orient the edges of the so obtained undirected graph towards the new vertices.
Taking $X = V(H)$ and $Y = \{w_e\colon e\in E(H)\}$, we obtain that $G=(X \cup Y, A)$ is a one-way directed bipartite graph with maximum in-degree~$2$.
Furthermore, $H$ is the $X$-graph of $G$, i.e., $H = G_X$.
Take $G$ as the input to the problem:
Does $G$ have a good allocation for $k$ agents?

Since $\Delta^{-}(G) = 2 < 3 \leq k$, and $G$ is a one-way directed bipartite graph, \cref{lemma:bipartite-good-allocation-coloring} implies that $H$ is $k$-colorable if and only if $G$ has a good allocation.
This completes our reduction, implying that the problem whether a graph has a good allocation for $k$ agents is \NP-hard.
\end{proof}

It should be noted that the analogous hardness result for minimizing maximum dissatisfaction (\hspace{1sp}\cite[Theorem 5]{minmax}) employs a related argument based on  \textsc{$k$-Colorability} and a similar graph construction.

\medskip
We conclude this section by observing that, in contrast to the intractability result of \cref{th:3agents} for the {\MINSUM} problem on graphs with ``height'' two, there is an efficient algorithm for graphs of \emph{width} at most two.

\begin{restatable}{theorem}{thmuno}\label{th:width-two-MINMAX-MINSUM}
For any number $k$ of agents, the {\MINSUM} problem is solvable in polynomial time if the preference graph $G$ has width at most~$2$.
\end{restatable}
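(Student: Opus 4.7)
The plan is to exploit Dilworth's theorem: since the width of $G$ is at most~$2$, its vertex set $V$ can be partitioned in polynomial time into two chains $C_1 = (u_1 \succ \ldots \succ u_p)$ and $C_2 = (v_1 \succ \ldots \succ v_q)$, for example by bipartite matching on the comparability relation. By \cref{structure-of-optimal-solutions}, it suffices to consider allocations in which every agent receives a nonempty antichain in $G$, so each antichain is either a singleton or a pair $\{u, v\}$ with $u \in C_1$ and $v \in C_2$ incomparable in $G$. Writing total dissatisfaction as $k|V| - \sum_{i\in K} |\bigcup_{x \in \pi(i)} \succc[x]|$, the task becomes to select $k$ pairwise-disjoint nonempty antichains of size at most $2$ so as to maximize the total satisfaction.

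The key structural observation is a \emph{non-crossing exchange} property. Suppose an allocation contains two pair allocations $\{u_{i_1}, v_{j_1}\}$ and $\{u_{i_2}, v_{j_2}\}$ with $i_1 < i_2$ and $j_1 > j_2$ (a ``crossing''). Swapping to $\{u_{i_1}, v_{j_2}\}$ and $\{u_{i_2}, v_{j_1}\}$ preserves antichainness of both pairs: any domination in either direction for a swapped pair would propagate along the chain orders and force one of the original pairs to violate antichainness. Moreover the total value is unchanged, since any element in $\succc[u_{i_1}] \cap \succc[v_{j_1}] \setminus (\succc[u_{i_2}] \cup \succc[v_{j_2}])$ must lie in either $C_1$ or $C_2$, and in each case a short chain argument shows that its existence would force $v_{j_1} \succ u_{i_2}$ or $u_{i_1} \succ v_{j_2}$, contradicting antichainness of one of the original pairs.

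Iterating this exchange, we may assume an optimal allocation has all pairs non-crossed: if the pairs are listed by increasing $u$-index, then the $v$-indices are also increasing. I would then solve {\MINSUM} by a dynamic program over states $(i, j, \ell)$, where $i$ counts the items already considered from the top of $C_1$, $j$ counts the items already considered from the top of $C_2$, and $\ell$ counts the agents already assigned. Transitions from $(i, j, \ell)$ decide the fate of $u_{i+1}$ and $v_{j+1}$: either skip (increment the respective coordinate), take as a singleton (increment coordinate and $\ell$, adding the corresponding $|\succc[u_{i+1}]|$ or $|\succc[v_{j+1}]|$), or pair them as $\{u_{i+1}, v_{j+1}\}$ when they form an antichain (increment both coordinates and $\ell$, adding $|\succc[u_{i+1}] \cup \succc[v_{j+1}]|$). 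The maximum total satisfaction is the DP value at $(p, q, k)$; the minimum total dissatisfaction follows by subtracting from $k|V|$.

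The main technical obstacle is the proof of the non-crossing exchange property, which combines two chain-propagation arguments (one for antichain preservation, one for value preservation) into a coherent reduction to the canonical non-crossed form. Once this is established, the DP is routine: it has $O(pqk) = O(n^3)$ states with constant-time transitions, so the overall algorithm runs in polynomial time and yields an optimal allocation by standard DP traceback.
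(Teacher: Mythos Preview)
Your approach is correct and genuinely different from the paper's. The paper does not use the non-crossing exchange at all: it builds an auxiliary undirected bipartite graph $\widehat G$ on $V\cup V'$ (with $V'$ a copy of $V$), whose edges are the size-two antichains $\{x,y\}\subseteq V$ together with the ``singleton'' edges $\{v,v'\}$, weighted by the dissatisfaction of an agent receiving that antichain. An optimal allocation (of the type guaranteed by \cref{structure-of-optimal-solutions}) then corresponds exactly to a minimum-weight matching of cardinality $k$ in $\widehat G$, which is computed via a standard min-cost matching routine. Your route instead fixes a Dilworth two-chain cover and proves that crossed pairs can be uncrossed without changing total satisfaction, reducing to an $O(pqk)$ dynamic program. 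Your argument is more elementary (no matching machinery) and exposes the monotone structure along the two chains; the paper's matching reduction is shorter and requires no exchange lemma.

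One technical slip to fix: the set you name when arguing value preservation, $\succc[u_{i_1}]\cap\succc[v_{j_1}]\setminus(\succc[u_{i_2}]\cup\succc[v_{j_2}])$, is trivially empty since $\succc[v_{j_1}]\subseteq\succc[v_{j_2}]$. Writing $A=\succc[u_{i_1}]\supseteq A'=\succc[u_{i_2}]$ and $B=\succc[v_{j_2}]\supseteq B'=\succc[v_{j_1}]$, inclusion--exclusion gives (crossed value) $-$ (uncrossed value) $=|(A\setminus A')\cap(B\setminus B')|$, so the set you must show empty is $(\succc[u_{i_1}]\setminus\succc[u_{i_2}])\cap(\succc[v_{j_2}]\setminus\succc[v_{j_1}])$. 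Your chain-propagation idea then works verbatim: a witness in $C_1$ forces $v_{j_2}\succ u_{i_2}$ and a witness in $C_2$ forces $u_{i_1}\succ v_{j_1}$, each contradicting antichainness of one of the \emph{original} pairs (not the ones you listed). With these indices corrected, the exchange lemma and the DP go through as you describe.
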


The constructive proof of this statement is very similar to the corresponding proof for the minimization of the maximum dissatisfaction given in \cite[Theorem 6]{minmax}, except the last part.
For sake of self-containment of the description we give a full proof in Appendix~\ref{sec:appendix}.

\section{A linear-time algorithm for polytrees}\label{sec:polytree}

While in general, even for $k=3$, {\MINSUM} cannot be decided in polynomial time unless \PP{} = \NP{}, restricting the preference graph $G$ to tree-like structures allows for positive results.
Recall that it follows from~\cite[Theorem 16]{general} that for any constant number $k$ of agents {\MINSUM} is fixed-parameter tractable with respect to  $k+t$ where $k$ is the number of agents and $t$ is the treewidth of the underlying undirected graph of the preference graph.
In the rest of the paper, we identify some cases when polynomial-time solvability holds also for the case of an arbitrary number of agents.
We start with the simplest case, which is a particular case of graphs with treewidth one.

\begin{observation}\label{th:same-tree-const-MINSUM}
For any number $k$ of agents, {\MINSUM} can be solved in linear time if the preference graph $G$ is an out-tree.
\end{observation}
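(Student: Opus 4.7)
The strategy is to exploit the good-allocation framework of \cref{good-implies-optimal}. The key structural fact about an out-tree $G$ with root $r$ is that, for every vertex $v$, the set $\pred[v]$ equals the vertex set of the unique directed $r$-to-$v$ path (together with $v$ itself). In particular, $|\pred[v]|=\textrm{depth}(v)+1$, where $\textrm{depth}(v)$ denotes the length of that path, and the sets $\pred[v]$ along every root-to-leaf branch form a totally ordered chain under inclusion. Given the rigid conditions of \cref{def:good}, this suggests that a good allocation can be constructed using only the depth of each vertex.

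Concretely, I plan to assign vertex $v$ to agent $(\textrm{depth}(v)\bmod k)+1$ and then verify goodness. For an arbitrary vertex $v$ there are two cases. If $\textrm{depth}(v)<k$, then the labels assigned to the vertices of $\pred[v]$ are $1,2,\ldots,\textrm{depth}(v)+1$, all pairwise distinct and non-\NULL, so the second condition of \cref{def:good} is satisfied. If $\textrm{depth}(v)\ge k$, then among the ancestors of $v$ there are, in particular, vertices at depths $0,1,\ldots,k-1$, which receive all $k$ distinct labels from $K$ and each dominate $v$; hence $v$ is dominated by every agent, and the first condition of \cref{def:good} holds. Either way the constructed allocation is $V$-good, and thus optimal by \cref{good-implies-optimal}.

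For the running time, it suffices to compute the depth of each vertex by a BFS (or DFS) from $r$ along the arcs of $G$; since an out-tree on $|V|$ vertices has exactly $|V|-1$ arcs, this takes $O(|V|)$ time, and the labeling itself is constant work per vertex. Hence the total running time is linear.

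I do not foresee a genuine obstacle: once one unpacks the definition of a good allocation for the chain structure $\pred[v]$ in an out-tree, the depth-modulo-$k$ labeling is essentially forced. The only mild subtlety worth noting is that when the out-tree has small depth, some agents may receive no item at all; this is harmless because \cref{good-implies-optimal} certifies optimality of \emph{any} good allocation, regardless of whether every agent is assigned a vertex, and \cref{structure-of-optimal-solutions} is only a convenient normalization rather than a prerequisite for optimality.
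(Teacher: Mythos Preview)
Your proposal is correct and follows essentially the same approach as the paper. Both arguments assign vertices according to their depth in the out-tree and then invoke \cref{good-implies-optimal}; the paper simply assigns all vertices at depth $i\in\{0,\ldots,k-1\}$ to agent $i$ and leaves deeper vertices unassigned, whereas you additionally wrap around modulo $k$, which is harmless since any vertex at depth $\ge k$ is already dominated by all agents.
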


\begin{sloppypar}
\begin{proof}
Let us assume for simplicity that the set of agents is \hbox{$K=\{0,1,\ldots,k-1\}$}.
We define the \emph{depth} of a vertex $v$ as the length of the unique maximal directed path in $G$ ending at~$v$.
Process the vertices in order based on their depth and assign all items $v\in V(G)$ at depth $i\in \{0,1,\ldots,k-1\}$ to agent $i$ until we run out of agents or vertices.
The resulting allocation is good (cf.\ \cref{def:good}), and hence optimal by \cref{good-implies-optimal}.
\end{proof}
\end{sloppypar}

We now present our main result, a linear-time algorithm for {\MINSUM} on polytrees, i.e., directed graphs imposed on a tree structure.
Different from out-trees where a level-oriented optimal solution for {\MINSUM} can be stated explicitly, this allows vertices with in-degree larger than one, which makes the problem significantly more challenging.

The main idea of the algorithm is to process the polytree branch after branch (a branch is a directed path starting from a source vertex) and assign sources and vertices with in-degree $1$ first.
To be more precise, we start with an arbitrary source vertex $s$, assign it to some agent, and process a branch starting from $s$ until we reach a vertex $v$ with in-degree greater than~1, meaning there is at least one other branch through $v$.
At this point, we want to repeat the process on that branch, starting from one of its sources, and resuming the allocation with the next agent that does not dominate $v$ yet. This way we ensure that, at the end, we will have maximized the number of agents dominating $v$ and, furthermore, that the predecessors of $v$ will be in order assigned to consecutive agents.

Throughout the algorithm we employ a label $q(v) \in K \cup \{\NULL\}$ for each vertex $v$, which, when $q(v)\in K$, denotes the agent that has to be considered first to be assigned some predecessor of $v$.

\begin{algorithm}[tbh]
	\KwIn{A polytree $G = (V,A)$, a set $K = \{0,1,\ldots, k-1\}$ of agents.}
	\KwOut{A good allocation $\pi$ for {\MINSUM} given $G$ and $K$.}
	\tcp{Initialization}
	Let $s\in V$ be a source vertex of $G$, $\pi$ be an empty allocation, and let $L$ be a list initialized to $L = (s)$.\label{initialization}
	
	Set $q(s)=k-1$ and set $q(u)= \NULL$ for every $u \in V\setminus \{s\}$. \label{initialization-p} %for proof purposes only
	
	\tcp{Main procedure}
	\While{$L\neq \emptyset$\label{while-loop}}{
	Let $v$ be the first element in $L$.\label{lookup-first-element-in-L}
	
	\If{there exists an in-neighbor $u$ of $v$ such that $q(u)=\NULL$  \label{ifclause}}{
% 	\While{not all direct predecessors of $v$ are assigned in $\pi$}{

	Add $u$ to the beginning of $L$.\label{prepending-vertices-to-L}
	
	Set $q(u)=q(v)$. \label{setpparent}
	}
	\Else{	\label{elseline}
% 	Set $\pi(v)= q(v)+1 \pmod k$.
    Update $\pi$ by assigning $v$ to the agent $q(v)$.\label{updatepi}
		
	\ForEach{out-neighbor $u$ of $v$\label{add out neighbors}}{

	    \If{$q(u)= \NULL$\label{if u not in L}}{
	        Add $u$ to the end of $L$.\label{appending-vertices-to-L}
	    }
	    Set $q(u)=(q(v)+1) \bmod k$. \label{setpchildren}
	}
	
	Remove $v$ from $L$.\label{remove-from-L}
	}}
	\Return{$\pi$}

	\caption{{\MINSUM} on polytrees.}\label{MINSUM oriented trees}
\end{algorithm}

\begin{theorem}\label{thm:forests}
For any number $k$ of agents, {\MINSUM} can be solved in linear time if the preference graph $G$ is a polytree.
\end{theorem}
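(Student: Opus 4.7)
The plan is to prove that \cref{MINSUM oriented trees} can be implemented in $O(|V|+|A|)$ time and that its output is a \emph{good} allocation in the sense of \cref{def:good}; by \cref{good-implies-optimal} this suffices for optimality, and since $G$ is a polytree we have $|A|=|V|-1$, giving the claimed linear time.

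For well-definedness and running time, first observe that each vertex is inserted into $L$ at most once: any insertion (on \cref{prepending-vertices-to-L} or \cref{appending-vertices-to-L}) requires $q(u)=\NULL$ at that moment, and $q$ is never reset to $\NULL$. Because $G$ is a polytree and hence connected, a short induction on the distance from $s$ in the underlying tree shows that every vertex eventually enters $L$. To avoid rescanning an in-neighbor list on each visit, I would maintain for each vertex $v$ a pointer to its next still-unlabeled in-neighbor, advanced only when an in-neighbor becomes labeled; this keeps each evaluation of \cref{ifclause} amortized $O(1)$ per arc.

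The heart of the proof is goodness. The key structural fact is that, because $G$ is a polytree, for every $v$ the subgraph induced by $\pred[v]$ is an \emph{in-tree} rooted at $v$: two in-neighbors of a common vertex cannot share an ancestor, as any such shared ancestor would produce two directed paths to $v$ and thus an undirected cycle in $G$. Consequently, if $u_1,\dots,u_j$ are the in-neighbors of $v$ in $G$, the sets $\pred[u_1],\dots,\pred[u_j]$ are pairwise disjoint and together with $\{v\}$ partition $\pred[v]$.

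The main step is an induction on $|\pred[v]|$ proving the strengthened claim that, letting $\beta_0(v)$ denote the value of $q(v)$ at the moment $v$ is first inserted into $L$, the labels assigned by $\pi$ to the vertices of $\pred[v]$ are exactly $\{(\beta_0(v)+i)\bmod k \colon 0\le i\le |\pred[v]|-1\}$, with $\lambda_\pi(v)\equiv\beta_0(v)+|\pred[v]|-1\pmod k$ being the top of this block. For the inductive step, let $u_1,\dots,u_j$ be $v$'s in-neighbors in the order the algorithm selects them on \cref{ifclause}. Vertex $u_1$ is inserted with $q(u_1)=q(v)=\beta_0(v)$, so the induction hypothesis applied to $u_1$ places the labels of $\pred[u_1]$ on the block starting at $\beta_0(v)$ and ending with $\lambda_\pi(u_1)$; the assignment of $u_1$ on \cref{updatepi} then updates $q(v)$ to $\lambda_\pi(u_1)+1$ through \cref{setpchildren}, so $\beta_0(u_2)=\lambda_\pi(u_1)+1$, and iterating concatenates the $j$ blocks into a single consecutive block that, followed by $\lambda_\pi(v)$, yields exactly the claimed block. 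The main obstacle is verifying that $q(v)$ is changed exactly once between the processing of $u_i$ and $u_{i+1}$, namely by the assignment of $u_i$; this uses the polytree property once more, since $\pred[u_i]\cap\pred[u_\ell]=\emptyset$ for $i\neq \ell$ implies that the only in-neighbor of $v$ assigned during $u_i$'s subtree processing is $u_i$ itself. Goodness then follows immediately from the claim: if $|\pred[v]|\le k$ the block consists of $|\pred[v]|$ distinct labels, while if $|\pred[v]|>k$ it wraps around modulo $k$ and covers all of $K$, so \cref{def:good} is satisfied at $v$ in both cases.
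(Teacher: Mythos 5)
Your overall strategy coincides with the paper's: show that \cref{MINSUM oriented trees} outputs a good allocation by proving that the labels on $\pred[v]$ form a block of consecutive residues modulo $k$, and invoke \cref{good-implies-optimal}. The termination, single-insertion, and linear-time parts are fine. However, the key invariant as you state it is false, and the error is not cosmetic. You anchor the block at $\beta_0(v)$, the value of $q(v)$ at the moment $v$ first enters $L$, and claim $\lambda_\pi(v)\equiv\beta_0(v)+|\pred[v]|-1\pmod k$. This holds only when $v$ enters $L$ via \cref{prepending-vertices-to-L} (or is the initial source), i.e., before any of its predecessors has been assigned. But a vertex can also enter $L$ via \cref{appending-vertices-to-L}, namely as an out-neighbor of an in-neighbor $w$ that has just been assigned; then $\beta_0(v)=(\lambda_\pi(w)+1)\bmod k$ and the labels of $\pred[w]\subseteq\pred(v)$ form a block \emph{ending} at $\beta_0(v)-1$, so the block of $\pred[v]$ begins strictly below $\beta_0(v)$. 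Already the path $a\to b$ with $k=3$ refutes your claim: the algorithm assigns $a$ to agent $2$ and $b$ to agent $0$, so the labels on $\pred[b]$ are $\{2,0\}$, whereas your invariant predicts $\{\beta_0(b),\beta_0(b)+1\}=\{0,1\}$ and $\lambda_\pi(b)=1$.

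The same oversight infects the inductive step: you enumerate only the in-neighbors of $v$ that are ``selected on \cref{ifclause}'', i.e., those still unlabeled when $v$ reaches the front of $L$, and you start the concatenation from $q(u_1)=q(v)=\beta_0(v)$. This silently drops the in-neighbors of $v$ already assigned before that point (in particular the one whose assignment caused $v$ to be appended), whose blocks sit below $\beta_0(v)$. The fix is to anchor the invariant at the other end, as the paper does: order the in-neighbors of $x$ by assignment time and prove that the set of agents owning an item of $\pred[x]$ is an interval modulo $k$ of size $|\pred[x]|$ that \emph{ends} at the value $q(x)$ read off at the moment $x$ is assigned (equivalently, at $\lambda_\pi(x)$). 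With that re-anchoring, your concatenation-of-disjoint-blocks argument and the polytree-based disjointness of the sets $\pred[u_i]$ go through essentially as you describe, and the final goodness conclusion (distinct labels if $|\pred[v]|\le k$, full coverage of $K$ otherwise) is correct.
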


\begin{proof}
Let $G = (V,A)$ be a polytree.
For ease of notation we define within this proof $K = \{0,1,\ldots, k-1\}$ as the set of agents.
Let $\pi$ be the output of \cref{MINSUM oriented trees} given $G$ and $K$.
We prove that \cref{MINSUM oriented trees} returns a good allocation $\pi$ through several claims that concern each vertex $x\in V$.

\begin{claim}\label{notnullifonlist}
At any iteration of the algorithm, when entering the while loop,
$q(x) \in K$
%$q(x)\neq \NULL$
if and only if $x\in L$ or was in $L$ in a previous iteration.
\end{claim}

\begin{cproof}
Note that $q(x) \in K$ if and only if $q(x)\neq \NULL$.
There are two possibilities for changing $q(x)$ after its initialization.
Thus, if $q(x) \in K$, either $q(x)$ was set to some $q(v)$ in \cref{setpparent}, which means that $x$ was added to $L$ in the preceding \cref{prepending-vertices-to-L}, or $q(x)$ was set in \cref{setpchildren}.
In that case, when $q(x)$ was set in \cref{setpchildren} for the first time during the execution of the algorithm, it was identified as $q(x)= \NULL$ in \cref{if u not in L} and thus added to $L$ in \cref{appending-vertices-to-L}.

On the other hand, recall that $q(s) =k-1$ holds for the first element ever added to $L$.
Every item $x$ added later to $L$ in \cref{prepending-vertices-to-L} or \cref{appending-vertices-to-L} receives in the subsequent line a value depending on the current first element $v$ in $L$.
In particular, if this first element $v$ has $q(v) \in K$, then also $q(x) \in K$ according to \cref{setpparent,setpchildren}.
Thus, by induction no element added to $L$ can ever have the $q$ value set to something that is not in $K$.
\end{cproof}

\begin{claim} \label{notnull}
Whenever the allocation $\pi$ is updated by assigning some vertex $x$ to $q(x)$, the value of $q(x)$ is an agent in $K$.
\end{claim}

\begin{cproof}
Note that the allocation $\pi$ is only updated in \cref{updatepi}, where, if some vertex $x$ is assigned to $q(x)$, then $x$ is a vertex in the list $L$ and due to \cref{notnullifonlist} we have that $q(x)\in K$.
\end{cproof}

\begin{claim} \label{adding-to-L}
Each vertex $x$ is added to the list $L$ at most once.
\end{claim}

\begin{cproof}
Observe that due to \cref{notnullifonlist}, \cref{setpparent} or \cref{setpchildren} never assign a \NULL{} value.
Only vertices $x$ with $q(x)=\NULL$ are candidates to be added to $L$, and whenever they are added, their
$q$ value is immediately set to a value in $K$  and is never set to  \NULL{} again.
Hence we conclude that each vertex can be added to $L$ at most once.
\end{cproof}

\begin{claim} \label{complete-allocation}
Vertex $x$ is assigned to some agent in $K$ exactly once.
\end{claim}

\begin{cproof}
We first establish that the while loop terminates. In each iteration of the while loop a vertex gets added to $L$ or a vertex gets deleted from $L$.
\cref{adding-to-L} thus ensures that the while loop must finish within $2|V|$ iterations.
Second, vertices get deleted from $L$ only after they get assigned to some agent in $K$ by \cref{notnull}.
Hence, taking into account \cref{adding-to-L}, to prove the statement it is enough to show that every $x\in V$ is added to the list $L$.
Consider the step at which a vertex $v$ is removed from $L$.
Then $v$ has no in-neighbor $u$ with $q(u) = \NULL$; otherwise, the condition in \cref{ifclause} would be satisfied.
By \cref{notnullifonlist}, this means that every in-neighbor of $v$ has been added to $L$ at some iteration.
Furthermore, every out-neighbor of $v$ is added to $L$ at \cref{add out neighbors}.
Following the fact that $G$ is weakly connected, every vertex in $V$ will be added to $L$ at some iteration, and thus be assigned to some agent as pointed out above.
\end{cproof}

For simplicity, we denote by $L_{i-1}$ the state of the list $L$ at the $i$-th iteration when entering the while loop, starting with $i = 1$ and $L_0=(s)$.
Also, we write $L_{\leq i} = \{ u \in V \colon u \in L_j$ for some $0 \leq j \leq i\}$.

\begin{claim} \label{listconnected}
At any iteration of the algorithm,
when entering the while loop we have that
\begin{enumerate}[label*=(\alph*)]
    \item \label{no-inn-onL}
    for every $x \in L$, every $y \in \pred(x) \cap L$ appears before $x$ in $L$, and
    \item \label{weakly-connected} vertices $x$ with $q(x)\in K$ form a weakly connected subgraph.
\end{enumerate}
\end{claim}

\begin{cproof}
We prove the claim by induction on the iteration $i$ of the while loop.
Recall by \cref{notnullifonlist} that $q(x) \in K$ if and only if $x\in L_{\le i-1}$.
When $i = 1$, we have $L_{\leq 0} = \{s\}$ and $q(s) = k-1$, and both \ref{no-inn-onL} and~\ref{weakly-connected} hold.
Suppose the claim holds for all iterations $i \leq \ell$, and consider iteration $\ell+1$.
In particular, we are given $L_\ell$ and are constructing $L_{\ell+1}$.
By the induction hypothesis, for every $x \in L_\ell$, every $y \in \pred(x) \cap L_\ell$ appears before $x$ in $L_\ell$ and the vertices $x$ with $q(x)\in K$ form a weakly connected subgraph.
% the first vertex $v$ in $L_\ell$ does not have any predecessor in $L_\ell$.

Consider first the case when an in-neighbor $u$ of $v$ is added at \cref{prepending-vertices-to-L}.
Then, $L_{\ell+1}$ is obtained from $L_\ell$ by adding $u$ to the beginning and, clearly, \ref{weakly-connected} is preserved.
Suppose for a contradiction to~\ref{no-inn-onL}
that $u$ has a predecessor $z$ in $L_\ell$.
Notice that this implies the existence of a path $P_{z,u}$ from $z$ to $u$ in $G$ that does not contain $v$, since $v$ is a successor of $u$.
Since the subgraph of $G$ induced by $L_{\leq \ell}$ is weakly connected, there exists a path $P_{z,v}$ between $z$ and $v$ in the underlying undirected graph of $G$ such that $P_{z,v}$ does not contain $u$.
However, the union of the vertex sets of $P_{z,u}$ and $P_{z,v}$ in the underlying undirected graph of $G$ induces a subgraph that contains a cycle, contradicting the fact that $G$ is a polytree.
Of course, \ref{no-inn-onL} holds for every other vertex in $L_{\ell+1}$, by the induction hypothesis.

Suppose now that an out-neighbor $u$ of $v$ is added at \cref{appending-vertices-to-L}.
It suffices to show that conditions \ref{no-inn-onL} and~\ref{weakly-connected} are preserved by each such operation.
Let $u_1,\ldots, u_p$ be the ordering of the out-neighbors $u$ of $v$ such that $q(u) = \NULL$ in which the algorithm adds them to $L$ in \cref{appending-vertices-to-L}, and, for all $j\in \{1,\ldots, p\}$, let $L_{\ell}^j$ denote the list $L$ right before $u_j$ is added.
Clearly, adding $u_j$ to $L$ preserves~\ref{weakly-connected}, and furthermore, $u_j$ is added at the end of the previous list.
Suppose for a contradiction that $u_j$ is a predecessor of some vertex $z$ in the current list $L_{\ell}^j$.
Since $u_j$ is an out-neighbor of $v$ and $v$ is a predecessor of $u_j$, there exists a path $P_{u_j,z}$ between $u_j$ and $z$ that does not contain $v$.
Again, as the subgraph of $G$ induced by $L_{\leq \ell}$ is weakly connected, there exists a path $P_{v,z}$ between $v$ and $z$ in the underlying graph of $G$ that does not contain $u$.
However, the union of the vertex sets of $P_{u_j,z}$ and $P_{v,z}$ induce a subgraph of the underlying graph of $G$ containing a cycle, contradicting the fact that $G$ is a polytree.
We conclude that $L_{\ell+1}$ satisfies \ref{no-inn-onL} for every vertex $x$ in $L_{\ell+1}$.
\end{cproof}

\begin{claim} \label{predecessors}
At any time when $x$ is assigned to an agent, every predecessor of $x$ has already been assigned to some agent.
\end{claim}

\begin{cproof}
Note that by \cref{complete-allocation} every $x \in V$ gets assigned in some iteration of the while loop.
Note also that if for some vertex $y\in V$ we have $y \in L_{\leq i}$ and $y \not\in L_i$, and because vertices are removed from the list only in \cref{remove-from-L}, after they have been assigned, this implies that $y$ has been assigned to some agent in an iteration previous to the $i$-th.
Thus, it suffices to show that if $x\in V$ is assigned at iteration $i$, then for every $z\in \pred(x)$ we have $z\in L_{\leq i}$ and $z\notin L_{i}$.

The proof is by induction on the iteration $i$ of the while loop.
The statement trivially holds for any source vertex, since it has no predecessors, in particular also for $L_{\leq 0}=\{s\}$, which settles the case $i=1$.
For $\ell\ge 2$, assuming the statement holds for all $i \leq \ell-1$, we show the statement for $i=\ell$.

Assume $x$ is the first vertex in list $L_\ell$ and that $x$ is not a source vertex.
Note that by \cref{ifclause} for every $y\in N^-(x)$ we have $q(y)\neq\NULL$ and $q(y)$ must have been changed in \cref{setpparent} or \cref{setpchildren} to some value in $K$.
By \cref{notnullifonlist} for each such $y$ it holds that $y \in L_{\leq \ell}$ and by \cref{listconnected}\eqref{no-inn-onL} $y \not\in L_\ell$.
We can therefore conclude that $y$ has been assigned to an agent in some iteration previous to $\ell$.
By the induction hypothesis the statement now holds for all predecessors of $y$.
Since the induction hypothesis states that for every $z \in \pred(y)$ we have $z \in L_{\leq \ell-1}$ and  $z \not\in L_{\ell-1}$,
and since $\pred(x) = \bigcup_{y \in N^-(x)} \pred[y]$
it follows that for every $z' \in \pred(x)$ we have $z' \in L_{\leq \ell}$ and  $z' \not \in L_{\ell}$, as claimed.
\end{cproof}

In the next claim we establish some properties of the allocation $\pi$ at that time vertex $x$ is assigned by the algorithm, which will imply that $\pi$ is $\{x\}$-good at that time and that the set of agents who get assigned items in $\pred[x]$ forms an interval.
The following notation will be used.
Given two integers $a$ and $b$, we define $[a,b] = \{j\in \mathbb{Z}\colon a\le j\le b\}$ and
\[[a,b]_k=\begin{cases}
[a \bmod k,b \bmod k] & \textnormal{if } (a \bmod k)  \leq (b \bmod k)\,,\\
[a \bmod k,k-1] \cup [0,b\bmod k] & \textnormal{otherwise.}
%\{a \bmod k,(a \bmod k)+1,\ldots,k-1,0,1,\ldots, (b \bmod k)-1,b \bmod k\} & \textnormal{otherwise }
\end{cases}
\]
Note that the cardinality of the set $[a,b]_k$ equals $(b -a) \bmod k+1$.

\begin{claim}\label{morethangood}
For every vertex $x$, when $x$ is assigned we have that either
\begin{enumerate}[label=(\alph*)]
    \item \label{type1good}
    $\{i\in K\colon\pi(i) \cap \pred[x] \neq \emptyset\}
= K$ (i.e., every agent dominates $x$), or
    \item \label{type2good}
$\{i\in K\colon\pi(i) \cap \pred[x] \neq \emptyset\}
= [q(x)-|\pred(x)|,q(x)]_k$ (i.e., the set of agents dominating $x$ forms an interval of size $|\pred [x]|$); consequently, all items in $\pred[x]$ get assigned to pairwise distinct agents.
\end{enumerate}
\end{claim}

\begin{cproof}
If every agent in $K$ dominates $x$, then~\ref{type1good} holds.
We can thus assume that not all agents dominate $x$.
We prove that in this case~\ref{type2good} holds for $x$ by induction on the length of a longest path starting at a source vertex and ending at $x$.
It is clear that the statement holds for all source vertices (in particular for~$s$), as each such vertex $x$ satisfies $|\pred(x)| = 0$ and is dominated by the only agent from $\{q(x)\} = [q(x)-0,q(x)]_k$.

Assume $x$ is not a source vertex and let $N^-(x)=\{u_1,\ldots,u_\ell\}$ be the in-neighbors of $x$ in the order they are assigned.
Since not all agents dominate $x$, and since by \cref{predecessors} all predecessors of $x$ have been assigned, it follows that no vertex in $N^-(x)$ is dominated by all agents, either.
Hence, by induction hypothesis, \ref{type2good} holds for every in-neighbor of $u$, that is, $\{i\in K\colon\pi(i) \cap \pred[u] \neq \emptyset\} = [q(u)-|\pred(u)|,q(u)]_k$ for all $u \in N^-(x)$.

In particular, for $j\in \{1,\ldots,\ell-1\}$ and two consecutive in-neighbors $u_j$ and $u_{j+1}$ of $x$ we have that
\begin{align}
\{i\in K\colon\pi(i) \cap \pred[u_j] \neq \emptyset\} &= [q(u_j)-|\pred(u_j)|,q(u_j)]_k\,,\nonumber\\
\{i\in K\colon\pi(i) \cap \pred[u_{j+1}] \neq \emptyset\} &= [q(u_{j+1})-|\pred(u_{j+1})|,q(u_{j+1})]_k\,.\label{eq:intervals}
\end{align}
We establish consecutiveness of intervals (i.e.,  condition \ref{type2good} for vertex $x$) by showing that we also have
\begin{equation}\label{eq:consecutive}
q(u_j)+1\equiv q(u_{j+1})-|\pred(u_{j+1})| \pmod k\,.
\end{equation}
To this end, consider the iteration of the while loop when $u_j$ is assigned.
We have that $q(x)=(q(u_j)+1) \bmod k$, and in a later iteration, when $u_{j+1}$ is considered we have that $q(u_{j+1})=q(x)=(q(u_j)+1) \bmod k$ and $u_{j+1}$ is added to the beginning of the list $L$.
We claim that when this happens, no in-neighbors of $u_{j+1}$ are/were in the list $L$ yet.
Indeed, if $u_{j+1}$ has an in-neighbor $z$ that was already added to $L$ (and possibly already removed), then \cref{listconnected}\ref{weakly-connected} proves the existence of a path $P_{x,z}$ between $x$ and $z$ not containing $u_{j+1}$ in the underlying graph $\widehat{G}$ of $G$, which together with the edges $\{x,u_{j+1}\}$ and $\{u_{j+1},z\}$ implies the existence of a cycle in $\widehat{G}$, contradicting the fact that $G$ is a polytree.

If $u_{j+1}$ is a source vertex, then $u_{j+1}$ is the next assigned vertex in the algorithm and gets assigned to $q(u_{j+1})=(q(u_j)+1) \bmod k$, which establishes~\eqref{eq:consecutive}.

If $u_{j+1}$ is not a source vertex, then the algorithm proceeds up the tree until it reaches a source vertex.
Let $\{u_{j+1}=u'_1,\dots,u'_r\}$ be vertices visited in this order from $u_{j+1}$ to $u'_r$, where $u'_r$ is a source vertex.
By \cref{setpparent} all these vertices get the $q$ value set to $(q(u_j)+1) \bmod k$, in each iteration the considered vertex is added to the front of list $L$, and no vertex is assigned in this process.
In the iteration that follows, $u'_r$ is assigned to agent $(q(u_j)+1) \bmod k$.
% Since $u_r'$ is the first vertex from the set $\pred[u_{j+1}]$ that is assigned to an agent, \eqref{eq:intervals} implies that  $u_r'$ is assigned to agent $(q(u_{j+1})-|\pred(u_{j+1})|)\bmod k$.
On the other hand, since $u_r'$ is the first vertex from the set $\pred[u_{j+1}]$ that is assigned to an agent, \eqref{eq:intervals} implies that  $u_r'$ is assigned to agent $(q(u_{j+1})-|\pred(u_{j+1})|)\bmod k$.
Therefore, the equality
$q(u_j)+1\equiv q(u_{j+1})-|\pred(u_{j+1})| \pmod k$ holds, establishing~\eqref{eq:consecutive}.
This proves that, among the predecessors of $x$, the intervals are consecutive.

We are now ready to complete the induction step, that is, to establish the equality \[\{i\in K\colon\pi(i) \cap \pred[x] \neq \emptyset\}
= [q(x)-|\pred(x)|,q(x)]_k\,.\]
Since among the predecessors of $x$, the intervals are consecutive and not all agents in $K$ dominate $x$, those intervals are pairwise disjoint.
Hence, the set of agents dominating $x$ is exactly the set $[q(u_{1})-|\pred(u_{1})|,q(u_\ell)]_k \cup \{q(x)\}$.
As $u_\ell$ is the last in-neighbor of $x$ that will be assigned to some agent, it follows that when $x$ is assigned, $q(x)=(q(u_\ell)+1) \bmod k$.
By~\eqref{eq:consecutive}, we have $q(u_j)=(q(u_{j+1})-|\pred(u_{j+1})|-1) \bmod k$ for all $j\in \{1,\ldots, \ell-1\}$.
This implies that
\begin{align*}
q(u_1)&=\Big(q(u_{\ell})-\sum_{j = 2}^\ell(|\pred(u_{j})|+1)\Big) \bmod k \\
&=\Big(q(u_{\ell})-\sum_{j= 2}^\ell(|\pred[u_{j}]|)\Big) \bmod k\\
&=\Big(q(u_{\ell})-\sum_{j=1}^\ell |\pred[u_j]|+|\pred[u_1]|\Big) \ \bmod k\\
&=\big(q(u_{\ell})-|\pred(x)|+|\pred[u_1]| \big) \bmod k\,,
\end{align*}
and hence the set of agents dominating $x$ is exactly
\begin{align*}
\phantom{=}&\big[q(u_{1})-|\pred(u_{1})|\,,\,q(u_\ell)\big]_k \cup \big\{q(x)\big\}
=\big[q(u_{1})-|\pred(u_{1})|\,,\,q(u_\ell)+1\big]_k\\
=&\big[q(u_{\ell})-|\pred(x)|+|\pred[u_1]|-|\pred(u_{1})|\,,\,q(u_\ell)+1\big]_k\\
=&\big[q(u_{\ell})+1-|\pred(x)|\,,\,q(u_\ell)+1\big]_k =\big[q(x)-|\pred(x)|\,,\,q(x)\big]_k\,.
\end{align*}
This completes the induction step and with it the proof of the claim.
\end{cproof}

Let us now explain how the above claims imply that the algorithm is correct by showing that the output $\pi$ is indeed a good allocation.
First, note that \cref{complete-allocation} implies that the resulting mapping $\pi$ is a complete allocation.
\Cref{morethangood} implies that when a vertex $x$ is assigned, the allocation $\pi$ at that time is $\{x\}$-good (since the cardinality of the set $[q(x)+1-|\pred(x)|,q(x)+1]_k$ equals $|\pred(x)|+1 = |\pred[x]|$).
Using also Claims \ref{complete-allocation} and \ref{predecessors}, we infer that the property of $\pi$ being $\{x\}$-good is maintained also after the vertex $x$ has been assigned.
This implies that the final allocation $\pi$ is good and hence solves {\MINSUM} according to \cref{good-implies-optimal}, as claimed.

We now show that \cref{MINSUM oriented trees} can be implemented to run in linear time.

\begin{claim}\label{lem:linear-time}
\Cref{MINSUM oriented trees} admits a linear-time implementation.
\end{claim}

\begin{cproof}
We may assume $|V|\geq k$; otherwise at most $|V|$ agents would be assigned a vertex, and we could set $k = |V|$ to ensure linear-time complexity.
We can assume that for each vertex $v\in V$ we are given the lists of its in- and out-neighbors.
Furthermore, to achieve a linear-time complexity, we keep the in-neighborhood lists sorted so that for each vertex $v$, the in-neighbors of $v$ such that the $q$ value is $\NULL$ appear before all the other in-neighbors of $v$.
Every time that a vertex $u$ with $q(u) = \NULL$ gets its $q$ value set to a value in $K$ (in \cref{setpchildren,setpparent}), we iterate through all its out-neighbors $v$ and for each such vertex $v$ we move $u$ to the end of its list of in-neighbors.
With appropriate pointer structures, for a fixed vertex $u$, all these updates can be carried out in time $\mathcal{O}(|N^+(u)|)$, resulting in a total update time of $\mathcal{O}(|V|)$.

Initialization takes $\mathcal{O}(|V|)$ since we set the $q$ value for each vertex and the $\pi$ value for each $k$.
Since, by \cref{adding-to-L}, each vertex is added to the list $L$ at most once, the assumption on the in-neighbor lists assures that the overall time complexity of checking the {if} statement in \cref{ifclause}
is $\mathcal{O}(|V|)$.
\Cref{setpparent,prepending-vertices-to-L} can be done in $\mathcal{O}(1)$ time each.
By \cref{complete-allocation} the {else} statement happens $\mathcal{O}(|V|)$ times, since each vertex is assigned and deleted once.
\Cref{updatepi,remove-from-L} can be done in $\mathcal{O}(1)$ time.
We evaluate the {foreach} statement in \cref{add out neighbors} independently of the {while} loop.
Since each $v\in V$ is assigned exactly once, it follows that in \cref{add out neighbors} the algorithm considers each outgoing edge at most once, resulting in overall time complexity of $\mathcal{O}(|V|)$.
\Crefrange{if u not in L}{setpchildren} can be done in $\mathcal{O}(1)$ time.
We therefore conclude that the total running time of $\mathcal{O}(|V|)$ is achieved.
\end{cproof}

This concludes the proof of \cref{thm:forests}.\qedhere
\end{proof}

Since out-trees are a special case of polytrees, Theorem~\ref{thm:forests} also implies Observation~\ref{th:same-tree-const-MINSUM}.
Moreover, by \cref{obs:disconnected} the result also holds for polyforests.

\section{Polynomially solvable cases on graphs with\\ treewidth two}\label{sec:treewidth2}

In this section we give additional polynomial-time results for {\MINSUM} for an arbitrary number of agents if the preference graph $G$ is restricted to certain extensions of tree structures.
More specifically, we show that the problem is solvable in two classes of graphs for which the treewidth of the underlying undirected graph is at most two: $s,t$-series-parallel graphs and out-cactus graphs.
It is known that an undirected graph has treewidth at most two if and only if every biconnected component is a series-parallel undirected graph (see, e.g.,~\cite{MR1647486}).

\medskip
A graph $G=(V,A)$ is an \emph{$s,t$-series-parallel graph} if $G$ satisfies one of the following conditions:
\begin{itemize}
    \item $G$ consists solely of the edge $(s,t)$,
    \item $G$ is the series composition of $G_1$ and $G_2$ (denoted as $G = G_1 \circ G_2$), where $G_1$ is an $s_1,t_1$-series-parallel graph and $G_2$ is an $s_2, t_2$-series-parallel graph and $G$ is constructed by identifying the vertex $t_1$ with $s_2$ and setting $s=s_1, t = t_2$,
    \item $G$ is the parallel composition of $G_1$ and $G_2$ (denoted as $G = G_1 \parallel G_2$), where $G_1$ is an $s_1,t_1$-series-parallel graph and $G_2$ is an $s_2, t_2$-series-parallel graph and $G$ is constructed by identifying the vertex $s_1$ with $s_2$ and the vertex $t_1$ with $t_2$ and setting $s = s_1 = s_2, t = t_1 = t_2$.
\end{itemize}

\begin{theorem}\label{thm:series-parallel}
For any number $k$ of agents, {\MINSUM} can be solved in polynomial time if the preference graph $G$ is an $s,t$-series-parallel graph.
\end{theorem}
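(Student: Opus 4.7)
The plan is to perform dynamic programming over an $s,t$-series-parallel decomposition tree of $G$, which can be constructed in linear time by standard algorithms and has $\mathcal{O}(|V|)$ nodes. Each node of this tree corresponds to an $s,t$-series-parallel subgraph $H$ of $G$ with terminals $s_H$ and $t_H$. The key structural observation I would use is that, inside $H$, the vertex $s_H$ is a source and $t_H$ is a sink, and in the full graph $G$ every path from a vertex outside $V(H)$ to a vertex in $V(H) \setminus \{t_H\}$ must pass through $s_H$. As a consequence, for every vertex $v \in V(H) \setminus \{t_H\}$ the set of agents dominating $v$ from outside $V(H)$ coincides with the set $X \subseteq K$ of agents dominating $s_H$ from outside $V(H)$, while for $v = t_H$ the external dominators form a set $Y$ with $X \subseteq Y$ that may additionally include agents owning external vertices directly adjacent to $t_H$.

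For each subgraph $H$ I would store a DP value $f_H(i_s, i_t, \sigma, \xi)$ equal to the minimum, over all internal allocations that assign $s_H \mapsto i_s \in K \cup \{\NULL\}$ and $t_H \mapsto i_t \in K \cup \{\NULL\}$, of the total contribution of vertices in $V(H) \setminus \{s_H, t_H\}$ to the global dissatisfaction. Here $\sigma$ is a descriptor of the internal set $A_H$ of agents owning some vertex of $\pred_H[t_H]$ (i.e., dominating $t_H$ from within $H$), which is needed whenever $H$ is composed with further subgraphs, and $\xi$ describes the external pair $(X,Y)$. Since at most two agents are distinguished (namely $i_s$ and $i_t$), by the symmetry of the remaining agents both $\sigma$ and $\xi$ can be compressed to the triple of sizes $(|A_H|, |X|, |Y|)$ together with a constant number of boolean flags recording whether $i_s, i_t$ lie in $A_H, X, Y$ and whether $i_s = i_t$; this yields only $\mathrm{poly}(k)$ state values per subgraph.

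The base case is a single edge, where $V(H) \setminus \{s_H, t_H\} = \emptyset$ and $\sigma$ is determined directly by $(i_s, i_t)$. In a series composition $H = H_1 \circ H_2$ with $t_1 = s_2$, the labels at the shared vertex must agree, the external $X$-parameter passed to $H_2$ is obtained by enlarging the inherited $X$ of $H$ by the internal set $A_{H_1}$ (recorded in the summary of $H_1$), and the combined $A_H$ aggregates the contributions of both children. In a parallel composition $H = H_1 \parallel H_2$, the terminal labels must coincide on both sides, each child inherits the external parameters of $H$ enriched by the partner's internal dominators of $t_H$, and $A_H = A_{H_1} \cup A_{H_2}$; the freedom to permute non-distinguished agents inside each child ensures that the size of this union is the maximum possible given $|A_{H_1}|, |A_{H_2}|$, the forced intersection at $\{i_s, i_t\}$, and the budget of $k$ agents. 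The final answer is read off from the root by setting $X = Y = \emptyset$, minimizing over terminal labels, and adding in the contributions of $s_G$ and $t_G$.

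The main obstacle is designing $\sigma$ and $\xi$ so that they are simultaneously polynomial in $k$ and informative enough to drive every transition correctly, especially in the parallel case where the overlap of $A_{H_1}$ and $A_{H_2}$ outside $\{i_s, i_t\}$ has to be bounded and then realized at its minimum via a symmetry-based relabeling of non-distinguished agents; correctness is then proved by induction on the decomposition tree. Once these details are verified, the algorithm runs in polynomial time: there are $\mathcal{O}(|V|)$ subgraphs in the decomposition, $\mathrm{poly}(k)$ DP states per subgraph, and every transition can be evaluated in polynomial time.
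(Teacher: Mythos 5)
Your overall plan---dynamic programming over the series--parallel decomposition tree---is a legitimate strategy in principle, and your structural observations (every path from outside $V(H)$ into $V(H)\setminus\{t_H\}$ passes through $s_H$; hence external dominators are uniform over $V(H)\setminus\{t_H\}$) are correct. But the proof has a genuine gap exactly where you yourself locate ``the main obstacle'': the claim that the descriptors $\sigma$ and $\xi$ can be compressed to sizes plus a constant number of membership flags is asserted, not proved, and it is not at all clear that it holds. The dissatisfaction contributed by an internal vertex $v$ of $H$ is $k$ minus $|X\cup I_v|$, where $I_v$ is the set of agents owning vertices of $\pred_H[v]$; so the child's optimal internal cost depends on the per-vertex overlaps $|X\cap I_v|$, not merely on $|X|$. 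In a parallel composition you then want, simultaneously, (i) each child's internal owners to avoid $X$ (to keep the children's costs at the values your table reports), and (ii) $A_{H_1}$ and $A_{H_2}$ to avoid each other (to maximize $|A_H|$ for the benefit of $t_H$ and its successors). When $|A_{H_1}|+|A_{H_2}|+|X|>k$ these desiderata conflict, the permutation of non-distinguished agents that fixes one overlap disturbs the other, and the trade-off between an $X$--$A_{H_i}$ overlap (which changes internal costs vertex by vertex) and an $A_{H_1}$--$A_{H_2}$ overlap (which changes only $|A_H|$) is not captured by your state. Resolving this would require either additional state (e.g., tracking $|A_H\cap X|$ and proving a nontrivial exchange lemma) or a different argument; as written, the induction cannot be carried out.

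It is worth noting that the paper avoids all of this. It proves by induction on the series--parallel structure that a \emph{good} allocation (in the sense of \cref{def:good}) always exists for every $k$: for a parallel composition one takes good allocations of the two parts with $\min\{k,|V_i|\}$ agents and cyclically shifts the labels of the second part; for a series composition one gives the second part only the $k-k_1$ agents not used in the first and again shifts. By \cref{good-implies-optimal} a good allocation attains the universal lower bound of \cref{lem:boundSum} and is therefore optimal, so no cost comparison---and no DP table at all---is needed. If you want to salvage your approach, the cleanest fix is to abandon the DP and instead prove directly that the lower bound is always attainable on $s,t$-series-parallel graphs.
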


\begin{proof}
\begin{sloppypar}
For each $k = 1,\dots,|V|$ we denote the set of agents as
\hbox{$K=\{\bot, 0,1,\dots, k-2\}$} (where $K=\{\bot\}$ for $k=1$) and prove that there exists a good allocation $\pi^{(k)}$ of $G=(V,A)$ assigning items to all agents in $K$.
To simplify notation we denote by $\lambda^{(k)}$ the labeling of $\pi^{(k)}$ and in the remainder of the proof refer to labelings $\lambda^{(k)}$ instead of the corresponding allocations $\pi^{(k)}$.
The source $s$ of the $s,t$-series-parallel graph is always assigned to agent $\bot$.
We show the existence of all such $\lambda^{(k)}$ by induction on~$|V|$.
\end{sloppypar}
As the base case, note that for a two-vertex graph with an arc $(s,t)$ we set $\lambda^{(1)}(s) = \bot$, $\lambda^{(2)}(s) = \bot $ and $\lambda^{(2)}(t) = 0$.
Observe that the corresponding allocation satisfies the claim.

For the induction step, let $G$ be an $s,t$-series-parallel graph with at least three vertices and assume that the claim holds for all smaller $s,t$-series-parallel graphs.
We have $G = G_1 \circ G_2$ or $G = G_1 \parallel G_2$ where $G_1$ and $G_2$ are $s,t$-series-parallel graphs smaller than $G$.
For $i = 1,2$, let $G_i = (V_i, A_i)$, let $s_i$ and $t_i$ denote the unique source and sink of $G_i$, respectively, and let $\lambda_i^{(k_i)}$ for $k_i = 1,\dots,|V_i|$ denote the labeling of $G_i$ corresponding to a good allocation that exists by the induction hypothesis.

Assume first that $G = G_1 \parallel G_2$ with source $s=s_1=s_2$ and sink $t=t_1=t_2$.
We define $\lambda^{(k)}$ for each $k=1,\dots,|V|$.
For $i = 1,2$, let $k_i= \min\{k, |V_i|\}$.
For every $v \in V_1$, we set $\lambda^{(k)}(v) =\lambda^{(k_1)}_1(v)$.
For every $v \in V_2\setminus V_1$ such that $\lambda^{(k_2)}_2(v)\in K$, we set $\lambda^{(k)}(v) = (\lambda^{(k_2)}_2(v)+k_1-1) \mod (k-1)$.

\setcounter{claim}{0}

\begin{claim}
For each $k=1,\dots,|V|$ and $v \in V$ it holds that $\lambda^{(k)}$ for $G = G_1 \parallel G_2$ is $\{v\}$-good.
\end{claim}

\begin{cproof}
For each $v \in V \setminus \{t\}$ this follows directly by the properties of $\lambda_1^{(k_1)}$ and $\lambda_2^{(k_2)}$.
Hence we only have to show $\{t\}$-goodness.
If $t$ is dominated by all agents $K=\{\bot, 0, 1, \dots, k-2\}$, $\{t\}$-goodness holds by definition.
Otherwise, for each $i = 1,2$, we must have $|V_i|<k$ and thus $k_i = |V_i|$, implying that the labels $\lambda^{(k_i)}_i(u)$ are pairwise distinct for all $u \in \pred_{G_i}[t] = V_i$ and none of these labels is $\NULL$.
This also holds for $\lambda^{(k)}|_{V_2}$, since it is just a rotational shift of the labels in $\lambda^{(k_2)}_2$ (except for the source and the sink of $G_2$).
If $k_1 + k_2 \geq k+2$, then all agents in $K$ appear as labels of vertices in $V_1$ or $V_2$.
Otherwise, $k_1 + k_2 < k+1$.
In that case it holds that for each $j \in \{0,1,\ldots, k_2-2\}$ we have $j + k_1-1 < k-2$, hence $\lambda^{(k)}(V_1) \cap \lambda^{(k)}(V_2 \setminus V_1) = \emptyset$.
Hence, all vertices in $V$ are assigned to pairwise distinct agents in $K$, i.e., $\lambda^{(k)}$ is $\{t\}$-good.
\end{cproof}

Assume now that $G = G_1 \circ G_2$ with source $s = s_1$, sink $t=t_2$ and midpoint $c=t_1=s_2$.
We define $\lambda^{(k)}$ for each $k=1,\dots,|V|$ as follows.
Let $k_1 = \min\{k, |V_1|\}$ and $k_2 = \min\{k-k_1, |V_2| \}$.
For every $v \in V_1$, we set $\lambda^{(k)}(v) =\lambda^{(k_1)}_1(v)$.
Furthermore, if $k_2\ge 1$, then for every $v \in V_2\setminus V_1$ such that $\lambda^{(k_2)}_2(v)\in K$, we set $\lambda^{(k)}(v) = (\lambda^{(k_2)}_2(v)+k_1-1) \mod (k-1)$.

\begin{claim}
For each $k=1,\dots,|V|$ and $v \in V$ it holds that $\lambda^{(k)}$ as defined for $G = G_1 \circ G_2$ is $\{v\}$-good.
\end{claim}
\begin{cproof}
For each $v \in V_1$ this follows directly by the properties of $\lambda_1^{(k_1)}$.

Let $v \in V_2 \setminus V_1$.
If $k_2 = 0$, then $k_1 = k\le |V_1|$ and all vertices in $V_2$ are dominated by all agents due to the allocation of the items in $V_1$, and hence $\lambda^{(k)}$ is $\{v\}$-good.
Otherwise, it holds that $\lambda^{(k)}|_{V_2}$ is $\{v\}$-good in $G_2$ with respect to the set of agents $\lambda^{(k)}(V_2)\setminus\{\NULL\}$, since it is a rotational shift of $\lambda_2^{(k_2)}$ (except for the source of $G_2$).
Since $k_2 \leq k-k_1$, we have that  $\lambda^{(k)}(V_1) \cap \lambda^{(k)}(V_2 \setminus V_1) = \emptyset$.
Furthermore, since $k_2>0$, we have $k_1<k$ and hence $k_1 = |V_1|$.
As $\lambda_1^{(k_1)}$ is good, all items from $V_1$ are allocated to distinct agents in $\{\bot, 0, 1, \dots, k_1-2\}$. Hence, $\{v\}$-goodness of $\lambda^{(k)}$ follows from
the $\{v\}$-goodness of $\lambda_2^{(k_2)}$.
\end{cproof}
We conclude that $\lambda^{(k)}$ is a good allocation for $G$, and hence optimal (by \cref{good-implies-optimal}).
Therefore, {\MINSUM} can be solved in polynomial time on $s,t$-series-parallel graphs.
\end{proof}

Finally, we consider the graph class of out-cactus graphs.
In this context, a \emph{cactus graph} is an undirected graph such that each edge is contained in at most one cycle.
Equivalently, any two  cycles share at most one vertex.
A directed graph $G=(V,A)$ with a dedicated root vertex $r$ is called an \emph{out-cactus} if the underlying undirected graph is a cactus, $r$ is the only source, and for each cycle $C$ in $G$ it holds that there exists a unique source $s$ and a unique sink $t$, and all arcs entering the cycle $C$ in $G$ have $s$ as their tail.
It should be mentioned that out-cactus graphs are a superclass of \emph{galled trees} (where every vertex can be contained in at most one cycle) introduced in \cite{gus04} and widely studied in phylogenetics.
This connection was also pointed out in~\cite{nial18}.

For out-cactus preference graphs we can determine an optimal allocation by processing the graph in a top-down way and assigning vertices in any cycle using a topological order.

\begin{theorem}\label{thm:out-cactus}
For any number $k$ of agents, {\MINSUM} can be solved in polynomial time if the preference graph $G$ is an out-cactus.
\end{theorem}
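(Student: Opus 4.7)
My plan is to build a good allocation directly; Corollary~\ref{good-implies-optimal} then guarantees optimality for {\MINSUM}. The construction processes the graph top-down along its block tree: each biconnected component is either a single arc or a cycle, and by the out-cactus definition every cycle $C$ has a unique source $s_C$ and unique sink $t_C$ joined by two internally disjoint paths. Throughout the traversal I maintain the invariant that immediately after labeling a vertex $v$, the labels on $\pred[v]$ either form a block of consecutive residues modulo $k$ ending at $\lambda(v)$ (when $|\pred[v]|<k$) or cover all of $K$ (when $|\pred[v]|\ge k$). For an arc $(u,v)$ it suffices to set $\lambda(v)=\lambda(u)+1\pmod k$, which clearly preserves the invariant.

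The real work is the cycle step. Fix $C$ with source $s$ already labeled $\ell=\lambda(s)$, sink $t$, and two internally disjoint paths $P_1,P_2$ of lengths $a_1\le a_2$, and let $m+1=|\pred[s]|$. If $|\pred[t]|=m+a_1+a_2\le k$, I use a skip-the-path assignment: label $P_1$'s internal vertices by $\ell+1,\dots,\ell+a_1-1$, then $P_2$'s internal vertices by $\ell+a_1,\dots,\ell+a_1+a_2-2$, and set $\lambda(t)=\ell+a_1+a_2-1$; all labels are distinct modulo $k$ since the span is at most $k$, so $\pred[t]$ has pairwise distinct labels and the invariant is preserved. If $|\pred[t]|>k$, then $t$-goodness demands that every agent dominate $t$, and I distinguish two sub-cases. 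When $a_2\ge k-m$, I assign both paths in parallel via $\lambda(v_j)=\lambda(u_j)=\ell+j\pmod k$ and $\lambda(t)=\ell+a_2\pmod k$; the long path alone wraps around, so the labels collected along it already exhaust $K$, and every internal vertex either has $m+1+j\le k$ consecutive-hence-distinct predecessor labels or inherits full domination from an earlier vertex on its path. When $a_2<k-m$, I use a mixed scheme: label $P_1$ by $\ell+1,\dots,\ell+a_1$ with $\lambda(t)=\ell+a_1$; label the first $k-m-1-a_1$ internal vertices of $P_2$ by $\ell+a_1+1,\dots,\ell+k-m-1$ to supply the labels missing from $\pred[s]\cup P_1$; and label the remaining $|\pred[t]|-k$ internal vertices of $P_2$ with $\ell+1,\ell+2,\dots$, recycling $P_1$'s labels.

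The main obstacle I anticipate is the bookkeeping in the mixed sub-case: one must verify that along $P_2$ every prefix $\pred[s]\cup\{u_1,\dots,u_j\}$ has exactly $m+1+j$ distinct labels whenever $m+1+j\le k$, and that the full $P_2$ prefix eventually realizes every element of $K$, so that $t$ and all subsequent $u_j$ inherit full domination. Because the two segments placed on $P_2$ occupy disjoint label intervals which together with $\pred[s]$ exhaust $\{\ell-m,\dots,\ell+k-m-1\}$, a direct index-tracking argument confirms both claims and shows that no wrap-around collision with $\pred[s]$ occurs before full domination is achieved. Passing the invariant at $t_C$ into the next block of the block tree and iterating produces a good allocation for all of $G$; Corollary~\ref{good-implies-optimal} then yields optimality, and since each vertex is labeled at most once during a single sweep of the block tree, the algorithm runs in polynomial (in fact linear) time.
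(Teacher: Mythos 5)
Your overall strategy --- build a good allocation top-down and invoke \cref{good-implies-optimal} --- is the same as the paper's, but your cycle step has a genuine gap. Your stated invariant is that after labeling $v$ the labels on $\pred[v]$ form a block of \emph{consecutive} residues ending at $\lambda(v)$ (or cover all of $K$). In the ``skip-the-path'' sub-case this invariant is already false for the internal vertices of $P_2$: the first internal vertex $u_1$ of $P_2$ has predecessor labels $\{\ell-m,\dots,\ell\}\cup\{\ell+a_1\}$, which is not a consecutive block whenever $a_1\ge 2$ (the same defect occurs on $P_2$ in your mixed sub-case). This would be harmless if the traversal only continued from $t_C$, but in a cactus further blocks attach at \emph{arbitrary} cut vertices, including internal vertices of a cycle; your write-up only discusses ``passing the invariant at $t_C$ into the next block.'' Concretely, take $k=10$, $\pred[s]=\{s\}$ with $\lambda(s)=0$, a cycle with $a_1=a_2=3$, so your rule gives $v_1,v_2$ labels $1,2$ on $P_1$ and $u_1,u_2$ labels $3,4$ on $P_2$, and hang a directed path $w_1\to\dots\to w_6$ off $u_2$. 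Your arc rule $\lambda(w)=\lambda(u)+1$ yields labels $5,6,7,8,9,0$ for $w_1,\dots,w_6$, so $\pred[w_6]$ has $9<k$ vertices but only $8$ distinct labels (label $0$ repeats, labels $1,2$ are never used below $u_2$). That vertex is not $\{w_6\}$-good, the lower bound of \cref{lem:boundSum} is missed, and the allocation is not optimal --- so the construction, as specified, fails.

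The fix requires abandoning consecutiveness as the invariant. The paper's proof keeps only $\{v\}$-goodness: at each cycle (and each bridge arc, treated as a one-edge cycle) it recomputes the set $K'$ of agents \emph{not yet dominating} the entry vertex $s$, hands the cycle's vertices in topological order to distinct agents of $K'$ while they last, and afterwards to any agent of $K'$ avoiding the vertex's predecessors; the two-parent sink is then handled by a short case analysis. Because the pool of candidate agents is recomputed from the actual set of dominating agents at every attachment point, gaps in the label set are automatically refilled, which is exactly what your $+1$-increment rule cannot do. Your Case 2a (parallel labeling when $a_2\ge k-m$) does preserve consecutiveness and is fine; it is Cases 1 and 2b, combined with subtrees rooted at internal cycle vertices, that break.
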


\begin{proof}
We first decompose, in linear time, the edges of the underlying undirected graph $\widehat{G}$ of $G$ into cycles and edges not contained in any cycle.
For simplicity of the presentation of the algorithm, we also consider each arc $(u,v)$ of $G$ such that $\{u,v\}$ is not contained in any cycle of $\widehat{G}$, as a cycle of length one (with source $u$ and sink $v$).

We process the out-cactus $G$ in a top-down order starting from the source $r$, which we arbitrarily assign to some agent.
Whenever reaching the source $s$ of a new cycle $C$ with sink $t$,
the remaining vertices of this cycle are processed in the following manner.
Let $K'$ be the set of agents that were not assigned any vertex in $\pred[s]$ and $x_1, \dots, x_{\ell}$ be the sequence of vertices in $V(C)\setminus\{s\}$ in any topological order (in particular, $x_{\ell}=t$).
For $i \leq |K'|$ we assign $x_i$ to the $i$-th agent in $K'$ (type (i) assignment).
If $\ell > |K'|$, then each vertex in $x_{|K'|+1}, \dots, x_{\ell}$ is assigned to an arbitrary agent in $K'$ that was not assigned any of its predecessors, if such an agent exists (type (ii) assignment).

Let $v_1,\dots,v_n$ be the ordering of $V(G)$ in which the vertices are processed by the algorithm.
We claim that when vertex $v_i$, for some $i \in \{1,\dots,n\}$, is processed by the algorithm, the allocation is $\{v_i\}$-good.
We prove the claim by induction on $i$, and notice that the base case with $i=1$ holds trivially.
Consider the step when vertex $v = v_i$, for $1 < i \leq n$, is processed, and let $C$ be the corresponding cycle and $K'$ the set of agents as in the algorithm.
If $v$ is not assigned to any agent, then it is already dominated by every agent, in which case the allocation is $\{v\}$-good.
We may thus assume that $v$ is assigned to some agent.
Note that $v$ cannot have more than two parents.
First, suppose that $v$ has a unique parent $u$.
By the induction hypothesis, the current allocation is $\{u\}$-good.
Then vertex $v$ is assigned to some agent that is not yet dominating it, and we obtain a $\{v\}$-good allocation.
Now, suppose that $v$ has two parents $u_1$ and $u_2$.
Then $v$ is the sink of $C$.
By the induction hypothesis, the current allocation is $\{u_1,u_2\}$-good.
Note that if $u_1$ has not been assigned to any agent by the algorithm, then every agent dominates $u_1$ and thus $v$, which implies that the allocation is $\{v\}$-good.
A similar reasoning applies if $u_2$ has not been assigned.
If both $u_1$ and $u_2$ have been assigned with a type~(i) assignment, then all the vertices in the set $V(C) \setminus \{v\}$ have been assigned to different agents and, in particular, every vertex in $\pred(v)$ has been assigned to a different agent.
So, again, $v$ is assigned to some agent that is not yet dominating it, and we obtain a $\{v\}$-good allocation.
So we consider the case when at least one of $u_1$ and $u_2$ has been assigned with a type~(ii) assignment and assume without loss of generality that this is the case for $u_2$.
Let $S$ be the set of vertices of $C$ that have been processed before $u_2$ by the algorithm.
Since $u_2$ was not assigned by a type~(i) assignment, all agents in $K'$ were already assigned to vertices in $S$.
Furthermore, by the definition of $K'$, all agents in $K\setminus K'$ are assigned to vertices in $\pred[s]$ where $s$ is the source of $C$.
Since $\pred[s]\subseteq \pred[u_2]$, it follows that every agent in $K$ is assigned a vertex in $\pred(u_2) \cup S$.
Following the fact that $\pred[u_2] \cup S \subseteq \pred(v)$, we obtain that $v$ is dominated by every agent, and hence that the allocation is $\{v\}$-good.

We conclude that the algorithm produces a good allocation.
Hence, by \cref{good-implies-optimal}, the algorithm solves {\MINSUM} in polynomial time.
\end{proof}

The above result could be extended also to the case where every cycle of the out-cactus is replaced by an $s,t$-series-parallel graph.
However, we refrain from elaborating the technical details of the resulting algorithm.
Note that the class of out-cacti and the class of $s,t$-series-parallel graphs are incomparable, since out-cacti can have more than one sink and the class of out-cacti does not contain any $s,t$-series-parallel graph consisting of two vertices joined by at least~$3$ internally vertex-disjoint paths of length at least~$2$.

\section{Conclusions}
\label{sec:conc}

This paper studies the efficient allocation of $n$ indivisible items among $k$ agents with identical preferences.
Here, efficiency is expressed by the goal of maximizing utilitarian social welfare, i.e., maximizing the sum of satisfaction values or, equivalently, minimizing the sum of dissatisfaction values over all agents.
A companion paper~\cite{minmax} considered the fairness criterion of egalitarian social welfare by minimizing the maximum dissatisfaction among all agents.
Satisfaction and dissatisfaction result from a common preference graph shared between all agents.
In this acyclic digraph satisfaction of an agent counts the number of all vertices assigned to that agent plus all vertices that are dominated by these.
Dissatisfaction is the complement of satisfaction, i.e., the number of undominated vertices without the allocated vertices.

We study the computational complexity of the problem and derive a strong \NP-hardness result for $k\geq 3$ agents even for a highly restricted preference graph.
On the other hand, the problem can be solved in linear time for $k=2$ agents on arbitrary graphs and for an arbitrary number of agents on special graph classes, where the underlying undirected graph is a tree. In addition, for an arbitrary number of agents the problem is solvable in polynomial time when the underlying directed graph belongs to certain subclasses of graphs with treewidth two.
In particular, whereas for the class of out-trees the problem is known to be \NP-complete for agents with individual (non-identical) preference graphs~\cite{general},  {\MINSUM}, i.e., the problem for agents with a common preference graph,  turns out to be solvable in linear time. This also settles the computational complexity for the class of out-stars, for  which the computational complexity is still open when agents do not necessarily have a common preference graph.

However, also for agents with a common preference graph, there are several graph classes for which the complexity is still an open problem. Particular examples, when considering an arbitrary number of agents, are the classes of graphs of bounded width or bounded treewidth.

\subsubsection*{Acknowledgements.}

The authors are very grateful to the anonymous reviewers for their helpful comments, which helped a lot to improve the presentation of the paper.

The work of this paper was done in the framework of a bilateral project between University of Graz and University of Primorska, financed by the OeAD (SI 13/2023) and the Slovenian Research and Innovation Agency (BI-AT/23-24-009).
The authors acknowledge partial support of the Slovenian Research and Innovation Agency (I0-0035, research programs P1-0285 and P1-0404, and research projects J1-3003, J1-4008, J1-4084, J1-60012, N1-0210, and N1-0370), by the research program CogniCom (0013103) at the University of Primorska, and by the Field of Excellence ``COLIBRI'' at the University of Graz.

\subsubsection*{Declaration of interest:}
The authors have no conflicts of interest to declare that are relevant to the content of this article.

\begin{appendices}
        \renewcommand{\appendixname}{Appendix}
\section{Proof of Theorem~\ref{th:width-two-MINMAX-MINSUM}}\label{sec:appendix}

Let us first state some preliminaries.
A \emph{chain} in $G$ is a set $C$ of vertices such that for every two vertices $x,y\in C$, the graph $G$ contains either an $x,y$-path or a $y,x$-path.
A \emph{chain partition} of $G$ is a family $\mathcal{C}$ of vertex-disjoint chains in $G$ such that every vertex of $G$ belongs to precisely one chain in $\mathcal{C}$.
Due to Dilworth's theorem, the width of a graph $G$ equals the minimum number of chains in a chain partition of $G$~\cite{Dilworth}.
In order to efficiently determine  a minimum chain partition of $G$, Fulkerson's approach~\cite{MR0078334} can be used, solving a maximum matching problem in a bipartite graph with $2n$ vertices (see~\cite{MR545530}).
Applying the algorithm of Hopcroft and Karp~\cite{MR0337699}, this can be done in  $\mathcal{O}(n^{5/2})$ time.

In the proof of Theorem~\ref{th:width-two-MINMAX-MINSUM} below, we will also make use of the problem of computing a matching of cardinality $k$ in a bipartite graph such that the total weight of the matching edges is as small as possible.
This variant of the classical \textsc{Linear Sum Assignment} problem, which asks for a perfect matching with minimum total weight in a bipartite graph (see \cite[Sec.~6.2]{assign2012}), was  considered in \cite{Amico97}.

Omitting details we state the following lemma.

\begin{lemma}[folklore]\label{matching-k-min-sum}
Given a bipartite graph $G = (V,E)$ having a perfect matching, an edge weight function $w:E\to \mathbb{R}_+$, and an integer $k\le |V|/2$, in polynomial time we can compute a matching in $G$ with cardinality $k$ such that the total weight of the matching edges is as small as possible.
\end{lemma}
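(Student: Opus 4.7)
\medskip
The plan is to reduce the problem to the classical \textsc{Linear Sum Assignment} (minimum-weight perfect matching) problem in a bipartite graph, which is solvable in polynomial time by the Hungarian method (see \cite[Sec.~6.2]{assign2012}). The reduction proceeds by padding both sides of the bipartition with dummy vertices in such a way that every perfect matching of the padded graph corresponds to a matching of size \emph{exactly} $k$ in $G$, with identical weight.

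Concretely, since $G$ admits a perfect matching, $|V|$ is even and the bipartition $(A,B)$ satisfies $|A|=|B|=n:=|V|/2$; the bipartition can be computed in linear time. I would construct an auxiliary bipartite graph $G'$ by introducing a set $D_A$ of $n-k$ new vertices on the $A$-side and a set $D_B$ of $n-k$ new vertices on the $B$-side. The edges of $G'$ consist of all original edges of $G$ with their original weights, together with zero-weight edges from every $a\in A$ to every $d\in D_B$ and from every $d\in D_A$ to every $b\in B$; crucially, no edges are added between $D_A$ and $D_B$. Because $|D_A|=|D_B|=n-k$, in any perfect matching of $G'$ each dummy in $D_A$ must be matched to a distinct vertex of $B$ and each dummy in $D_B$ to a distinct vertex of $A$; consequently exactly $n-(n-k)=k$ vertices of $A$ remain to be matched to $B$ through original edges of $G$, and symmetrically for $B$. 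Conversely, any matching $M$ of size exactly $k$ in $G$ extends (arbitrarily) to a perfect matching of $G'$ by matching the $n-k$ unmatched vertices on each side to the dummies on the opposite side. Hence the minimum total weight of a matching of cardinality exactly $k$ in $G$ equals the minimum weight of a perfect matching in $G'$, and the latter's optimal solution directly reveals the former by restricting to original edges.

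Two minor points to verify: $G'$ has a perfect matching (true, since $G$ has a perfect matching, which contains in particular $k$ pairwise disjoint edges, and the remaining real vertices can be matched to dummies); and $G'$ has polynomial size ($|V(G')|\le 4n$, $|E(G')|\le |E|+2(n-k)n = O(|E|+n^2)$). Applying the Hungarian method to $G'$ therefore yields a polynomial-time algorithm. There is no substantial obstacle here — the only subtlety is ensuring that dummies cannot match among themselves, which is exactly what forces the cardinality of the induced real matching to be $k$ rather than merely $\ge k$.
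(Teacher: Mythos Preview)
Your reduction is correct. The paper itself does not give a proof of this lemma: it is stated as folklore with the remark ``Omitting details we state the following lemma'' and a pointer to~\cite{Amico97}. Your padding construction---adding $n-k$ dummies on each side with zero-weight edges to the opposite real side only, and no dummy-to-dummy edges---is exactly the standard way to force any perfect matching of the auxiliary graph to use precisely $k$ original edges, and your verification that $G'$ admits a perfect matching and has polynomial size is sound. So your argument supplies what the paper omits.
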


We are now ready to prove Theorem~\ref{th:width-two-MINMAX-MINSUM}.

\thmuno*

\begin{proof}
Let $G = (V,A)$ be a directed acyclic graph of width at most $2$.
Let us associate to $G$ an undirected bipartite graph $\widehat G = (\widehat V,E)$ defined as follows:
\begin{enumerate}
\item The vertex set of $\widehat G$ is $\widehat V = V\cup V'$ where $V' = \{v':v\in V\}$ is a set of new vertices.
\item There are two types of edges in $\widehat G$: edges of the form $\{x,y\}\subseteq V$, where $\{x,y\}$ is an antichain
in $G$ with size two, and edges of the form $\{v, v'\}$ where $v\in V$.
\end{enumerate}
Note that $\widehat G$ is bipartite.
Indeed, if $G$ has width one, then each of $V$ and $V'$ is an independent set in $\widehat G$ and their union is $\widehat V$.
If $G$ has width two, then by Dilworth's theorem there exists a chain partition $\{P,Q\}$ of $G$ with size two, and the sets $V(P)\cup \{v':v\in V(Q)\}$ and $V(Q)\cup \{v':v\in V(P)\}$ are independent sets in $\widehat G$ with union $\widehat V$.
Furthermore, the edges of the form $\{v, v'\}$ where $v\in V$ form a perfect matching in $\widehat G$.

We now assign nonnegative weights to the edges of $\widehat G$ as follows.
For each edge $e = \{x,y\}\subseteq V$ corresponding to an antichain in $G$ with size two, we define the weight to $e$ to be the dissatisfaction of an agent $i\in K$ under any allocation $\pi$ such that $\pi(i) = \{x,y\}$, that is, the total number of vertices in $G$ that are not reachable by a path from $\{x,y\}$.
Similarly, for each edge of the form $\{v, v'\}$ where $v\in V$, we define the weight to $e$ to be the dissatisfaction of an agent $i\in K$ under any allocation $\pi$ such that $\pi(i) = \{v\}$, that is, the total number of vertices in $G$ that are not reachable by a path from $v$.
Clearly, this weight function can be computed in polynomial time given $G$.

When solving the {\MINSUM} problem for $G$ and $K$, \cref{structure-of-optimal-solutions} implies that each agent $i\in K$ will be allocated either only one item or two items forming an antichain.
To each such allocation $\pi$ we can associate a matching $M_\pi$ in $\widehat{G}$ as follows.
For each agent $i\in K$ receiving only one item, say $\pi(i) = \{v\}$ for some $v\in V$, we include in $M_\pi$ the edge $\{v,v'\}$.
For each agent $i\in K$ receiving some antichain $\{x,y\}\subseteq V$ of size two, i.e.\ $\pi(i) = \{x,y\}$, we include in $M_\pi$ the edge $\{x,y\}$.
Since the sets of items assigned to different agents are pairwise disjoint, the obtained set $M_\pi$ is indeed a matching.
Furthermore, since for each of the $k$ agents, the set of items allocated to the agent forms a \emph{nonempty} antichain in $G$, we can associate to each agent $i\in K$ a unique edge $e_i\in M_{\pi}$; in particular, $M_{\pi}$ has cardinality $k$.
By construction, the weight of the edge $e_i$ equals to the dissatisfaction of agent $i$ with respect to the allocation $\pi$.
Note that the above procedure can be reversed: to any matching $M = \{e_i:i\in K\}$ of cardinality $k$ in $\widehat G$, we can associate an allocation $\pi_M$ of items in $G$ to the $k$ agents such that the dissatisfaction of agent $i$ under $\pi_M$ equals the weight of the edge $e_i$.

It follows that the {\MINSUM} problem for $G$ and $K$ can be reduced in polynomial time to the problem of computing a matching in $\widehat G$ with cardinality $k$ such that the total weight of the matching edges is as small as possible.
By \cref{matching-k-min-sum}, this problem can be solved in polynomial time.
\end{proof}
\end{appendices}

\bibliographystyle{plain}
\bibliography{refs}

\end{document}